\newtheorem{theorem}{Theorem}
\newtheorem{lemma}{Lemma}
\newtheorem{corollary}{Corollary}
\newtheoremstyle{case}{}{}{}{}{}{:}{ }{}
\theoremstyle{case}
\DeclarePairedDelimiter{\floor}{\lfloor}{\rfloor}
\DeclarePairedDelimiter{\ceil}{\lceil}{\rceil}
\DeclarePairedDelimiter{\nint}\lfloor\rceil
\def \TM
\def \TG {T_{\mathrm{comp}}(C_{\mathrm{G}}(\vct{n}, \vct{k}))}
\def \TGO {T_{\mathrm{comp}}(C_{\mathrm{G}}(\vct{n}, \vct{k}^*))}
\newcommand{\Max}[1]{\raisebox{0.5ex}{\scalebox{0.8}{$\displaystyle \max_{#1}\;$}}}
\newcommand{\Lim}[1]{\raisebox{0.5ex}{\scalebox{0.8}{$\displaystyle \lim_{#1}\;$}}}
\newcommand{\mtx}{\mathbf}
\newcommand{\vct}{\bm}
\DeclarePairedDelimiter{\abs}{\lvert}{\rvert}
\begin{document}
\title{Coded Matrix Multiplication \\
	on a Group-Based Model} 

% %%% Single author, or several authors with same affiliation:
% \author{%
%   \IEEEauthorblockN{Stefan M.~Moser}
%   \IEEEauthorblockA{ETH Zürich\\
%                     ISI (D-ITET)\\
%                     CH-8092 Zürich, Switzerland\\
%                     Email: moser@isi.ee.ethz.ch}
% }

%%% Several authors with up to three affiliations:
\author{%
  \IEEEauthorblockN{Muah Kim}
  \IEEEauthorblockA{\textit{School of Electrical Engineering} \\
  	\textit{KAIST}\\
  	Daejeon, Republic of Korea \\
  	02mu-a21@kaist.ac.kr}
  \and
  \IEEEauthorblockN{Jy-yong Sohn}
  \IEEEauthorblockA{\textit{School of Electrical Engineering} \\
  	\textit{KAIST}\\
  	Daejeon, Republic of Korea  \\
  	jysohn1108@kaist.ac.kr\\}
  \and
  \IEEEauthorblockN{Jaekyun Moon}
  \IEEEauthorblockA{\textit{School of Electrical Engineering} \\
  	\textit{KAIST}\\
  	Daejeon, Republic of Korea  \\
  	jmoon@kaist.edu\\}
}

%%% Many authors with many affiliations:
% \author{%
%   \IEEEauthorblockN{Albus Dumbledore\IEEEauthorrefmark{1},
%                     Olympe Maxime\IEEEauthorrefmark{2},
%                     Stefan M.~Moser\IEEEauthorrefmark{3}\IEEEauthorrefmark{4},
%                     and Harry Potter\IEEEauthorrefmark{1}}
%   \IEEEauthorblockA{\IEEEauthorrefmark{1}%
%                     Hogwarts School of Witchcraft and Wizardry,
%                     1714 Hogsmeade, Scotland,
%                     \{dumbledore, potter\}@hogwarts.edu}
%   \IEEEauthorblockA{\IEEEauthorrefmark{2}%
%                     Beauxbatons Academy of Magic,
%                     1290 Pyrénées, France,
%                     maxime@beauxbatons.edu}
%   \IEEEauthorblockA{\IEEEauthorrefmark{3}%
%                     ETH Zürich, ISI (D-ITET), ETH Zentrum, 
%                     CH-8092 Zürich, Switzerland,
%                     moser@isi.ee.ethz.ch}
%   \IEEEauthorblockA{\IEEEauthorrefmark{4}%
%                     National Chiao Tung University (NCTU), 
%                     Hsinchu, Taiwan,
%                     moser@isi.ee.ethz.ch}
% }

\maketitle

%%%%%%
%% Abstract: 
%% If your paper is eligible for the student paper award, please add
%% the comment "THIS PAPER IS ELIGIBLE FOR THE STUDENT PAPER
%% AWARD." as a first line in the abstract. 
%% For the final version of the accepted paper, please do not forget
%% to remove this comment!
%%
\begin{abstract}
%\emph{THIS PAPER IS ELIGIBLE FOR THE STUDENT PAPER AWARD.} 
Coded distributed computing has been considered as a promising technique which makes large-scale systems robust to the ``straggler" workers. 
Yet, practical system models for distributed computing have not been available that reflect the clustered or
grouped structure of real-world computing servers. Neither the large variations in the computing power and bandwidth capabilities across different servers have been properly modeled.
We suggest a \emph{group-based model} to reflect practical conditions and develop an appropriate coding scheme for this model. The suggested code, called \emph{group code}, employs parallel encoding for each group. 
We show that the suggested coding scheme can asymptotically achieve optimal computing time in regimes of infinite $n$, the number of workers. While theoretical analysis is conducted in the asymptotic regime, numerical results also show that the suggested scheme achieves near-optimal computing time for any finite but reasonably large $n$. Moreover, we demonstrate that the decoding complexity of the suggested scheme is significantly reduced by the virtue of parallel decoding.
\end{abstract}

%% The paper must be self-contained. However, if you are referring to
%% a full version for checking certain proofs, please provide the
%% publically accessible location below.  If the paper is completely
%% self-contained, you can remove the following line from your
%% submission.

%\textit{A full version of this paper is accessible at:}
%\cmt{\url{http://isit2019.fr/}} 

\section{Introduction}
In the era of big data, 
%As the amount of data and the volume of data have increased, 
distributed computing has been recognized as a solution for realizing large-scale machine learning \cite{ref:largescale}. Unlike conventional centralized systems, a distributed computing system divides the computational work into subtasks and distributes them over multiple nodes. This system successfully supports large-scale machine learning by reducing the computing time via parallel computing.

Yet, there is still a room for improvement as the system is slowed down by the random nature of computing nodes, where certain nodes are inevitably slower than others. In particular, the distributed system is shown to be dramatically degraded by the slowest workers, the ``stragglers", whose computational latency is realized by the tail probability \cite{ref:tail}. \emph{Lee} et al. suggested coded computation as a straggler-proof scheme, which speeds up matrix multiplication by employing redundancy with a maximum distance separable (MDS) code \cite{ref:speedup}. Afterwards, it is shown that coded computation can effectively improve the performance of computing system with regards to: matrix-matrix multiplication \cite{ref:highd, ref:polycode, ref:Tavor}, distributed gradient descent \cite{ref:gradientcodingArxiv, ref:gradientdescentRS}, convolution \cite{ref:codedconvolution}, Fourier transform \cite{ref:codedFourier}, 
%personalized PageRank \cite{ref:personalizedpagerank}, 
and matrix sparsification \cite{ref:shortdot, ref:matrixsparsif}. 
%\cmt{add more}. 
Moreover, regarding the matrix multiplication, new models reflecting the practical environment of computing systems such as the tree structure and heterogeneity are suggested and analyzed \cite{ref:HCMM,ref:hier}.

In recent years, distributed cloud computing services such as Amazon EC2 enable customers to deal with large-scale computation \cite{ref:AmazonEC2}. The real distributed computing systems generally adopt the multi-rack structure, where the computing workers are grouped together in multiple racks \cite{ref:MapReduce, ref:ShuffleWatcher, ref:ScaleOut}. Moreover, in the real world, the workers' latency statistics are heterogeneous due to a mixed use of hardwares with varying performances or the dynamics of multiple user requests over shared resources \cite{ref:HeteroEnv}. So far, the homogeneous grouped structure has been considered in \cite{ref:hier}, and the heterogeneous workers without grouped feature has been studied in \cite{ref:HCMM}. 
However, system solutions which reflect both of the two practical conditions$-$grouped structure and heterogeneity (in terms of number of workers in each group as well as the bandwidth of the communication links associated with the groups)$-$are yet to be established. 
%Yet, a model considering both of the two practical conditions, grouped structure and heterogeneity, has not been developed. We come up with designing a new model fitting these conditions and finding out a coded computation scheme which is suitable for the heterogeneous grouped structure.

\begin{figure}[!t]
	\centering
	\includegraphics[height=50mm]{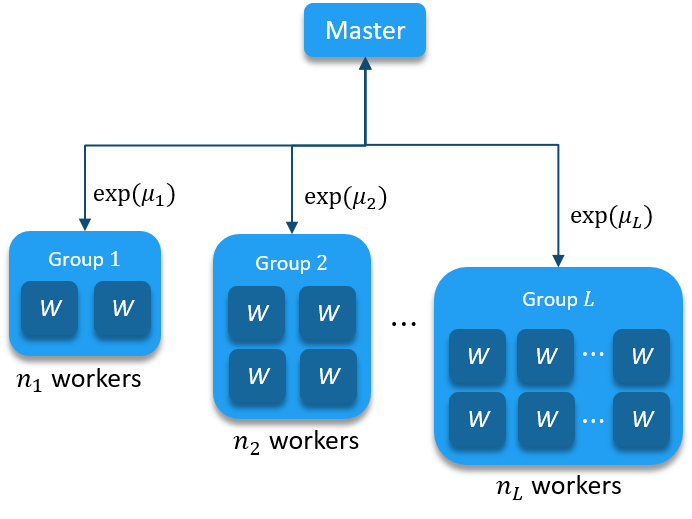}
	\caption{Computing Network Model: an $(\vct{n},\bm{\mu})-$group system with $L$ groups. Group $i$ has $n_i$ workers having i.i.d. completion time distribution with statistical parameter $\mu_i$ for $i\in[L]$.}
	\label{Fig:network1}
	\vspace{-6mm}
\end{figure}

\subsection{Main Contributions}

We design a group-based computing model as shown in Fig. \ref{Fig:network1}, where $n$ workers are dispersed into $L$ groups, each having a different number of nodes and distinct computing time statistics. We assume that group $i$ has $n_i$ nodes, each of which has a computing time given by an exponential random variable with rate $\mu_i$. 
This is a more practical model than the existing ones because it resembles the tree-shaped (grouped) distributed computing systems such as the Hadoop file system while also considering the heterogeneity of the groups.

Considering the scenario of computing $k$ tasks in the suggested model, we show that an $(n,k)-$MDS code achieves the optimal computing time. Yet, this scheme requires a prohibitive decoding complexity as $k$ increases. In addition, it is hard to obtain a closed-form expression for the optimal computing time due to the heterogeneous nature of the model.

To address these issues, we propose a coding scheme called \textit{group code} which divides the total $k$ tasks into $L$ partitions and then employs $L$ distinct MDS codes. We show that a carefully designed group code can asymptotically achieve the optimal computing time as $n$ goes to infinity. In addition, the suggested group code can reduce the decoding complexity down to a factor of $(\frac{1}{L})^{\beta}$ compared to an $(n,k)-$MDS code, where $\beta>1$.
Furthermore, we obtain a closed-form expression for the expected optimal computing time, when the number of workers $n$ goes to infinity. 

\subsection{Related Works}
Previous works on coded computation either achieves the optimal computing time with a prohibitive decoding complexity, or reduce the decoding complexity at the sacrifice of the optimality in computing time. In addition, most of them assume homogeneous workers. Applying an $(n,k)-$MDS code in homogeneous systems is suggested by \cite{ref:speedup}, which achieves the optimal computing time but requires a huge  decoding complexity as $k$ increases. 
Considering a system model with heterogeneous workers, the authors of \cite{ref:HCMM} suggested a coding scheme which achieves an asymptotically optimal computing time. However, the decoding process requires the computational complexity of $\mathcal{O}(k^3)$. Moreover, the coding schemes suggested in \cite{ref:highd, ref:hier, ref:Tavor} encode the tasks along multiple dimensions, which can
effectively reduce the decoding complexities by the virtue of parallel decoding or a peeling decoding scheme. However, these codes lose the MDS property and thereby cannot achieve the optimal computing time. Besides, 
these codes do not provide solutions for practical systems with heterogeneous groups. Compared to these existing works, our suggested scheme is shown to not only asymptotically achieve the optimal computing time, but also requires a low decoding complexity.

%Moreover, it is analyzed on a heterogeneous group-based model.
%have hardly looked into the effect of using multiple codes independently, which enables the parallel encoding and decoding. 
% Our suggested model is featured by its grouped structure and the heterogeneity among groups. The grouped structure is inspired by the tree-structures used in \cite{ref:ScaleOut, ref:server-rack, ref:hier}. Coded matrix multiplication on a similar hierarchical structure with homogeneous groups was considered in \cite{ref:hier}, while the present paper considers scenarios of heterogeneous groups. The heterogeneity in response time distribution is previously studied in \cite{ref:HCMM}, where all the workers have distinct distributions (i.e., each group has one worker). Compared to the work of \cite{ref:HCMM}, the present paper considers a more practical setting with grouped workers. 
% (mention  why the grouped model is more appropriate to reflect the practical distributed computing system)
%Moreover, motivated by the practical implementation issue, the encoding or decoding time of their coded distributed computing system is analyzed by the authors of \cite{ref:speedup, ref:Tavor, ref:highd, ref:hier}. The present paper also considers the encoding and decoding time in the overall latency analysis, motivated by the similar reasons.

\subsection{Notations}

Here, we list mathematical notations used in this paper. For a positive integer $n$, a set of positive integers less than or equal to $n$  is denoted by $[n]=\{1,2,\dots, n\}$. For a matrix $\mtx{A}$ with multiple rows, $\mtx{A}=[\mtx{A}_1 ; \mtx{A}_2]$ represents row-wise division of $\mtx{A}$, i.e. $\mtx{A}^T=[\mtx{A}_1^T  \mtx{A}_2^T]$. 
We use $C_G(\vct{n},\vct{k})$ to denote  an $(\vct{n},\vct{k})-$group code and $C_\mathrm{MDS}(n,k)$ to denote an $(n,k)-$MDS code. The definition of  $(\vct{n},\vct{k})-$group code is in Section \ref{Subsection:compmodel}. We denote the floor, ceil and round functions of a real value $x$ by $\floor{x},\ceil{x}$ and $\nint{x}$.  
% For $a, b \in \mathbb{Z}^+$, $a|b$ means $a$ can divide $b$. Moreover, for two random variables $A$ and $B$, $A \xrightarrow{d} B$ and $A \xrightarrow{p} B$ represent the convergence of $A$ to $B$ in distribution and in probability, respectively. We shorten a sequence $x_1 , x_2, \dots, x_n$ as $x_1^n$. 
% For proof: For a probability distribution $P(\cdot)$, the empirical distribution of $P(\cdot)$ with $n$ samples is represented by $\widetilde{P}_n(\cdot)$. Moreover, we denote the cumulative distribution function of the standard normal distribution by $\Phi(\cdot)$. 

\begin{figure}[!t]
	\centering
	\includegraphics[height=40mm]{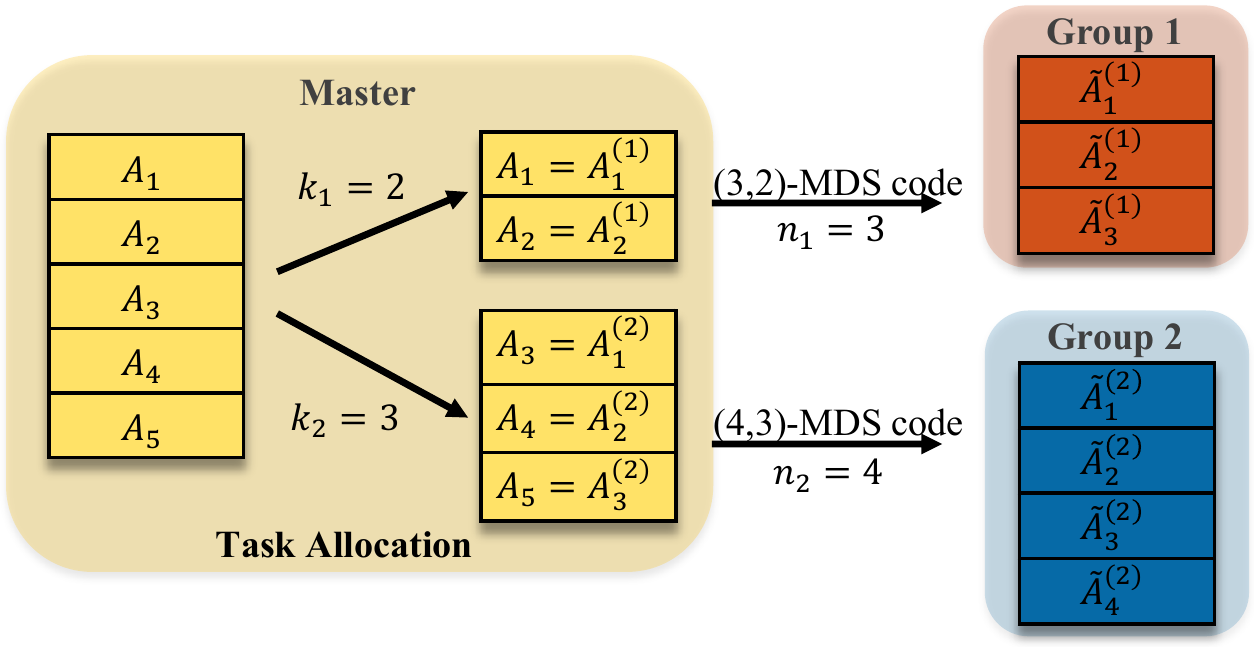}
	\caption{Illustration of $(\vct{n}, \vct{k})=([3,4],[2,3])-$group code. The matrix is split into two submatrices following the task allocation vector $\bm{k}=[2,3]$, and then the submatrices are encoded with MDS code group-wise.}
	\label{Fig:comp}
%	\vspace{-5mm}
\end{figure}

\section{System Model and Target Problem}\label{Section:SysMod}

\subsection{System Model}\label{Subsection:compmodel}

Consider the $n$ workers that are spread into $L$ groups as shown in Fig. \ref{Fig:network1}. Here, group $i$ has $n_i$ workers whose response times are described by i.i.d. random variables with a parameter of $\mu_i$. We define this system as an $(\bm{n},\bm{\mu})-$\textit{group system}, where $\vct{n}=[n_1,n_2,\dots,n_L]$ and $\bm{\mu}=[\mu_1,\mu_2,\dots,\mu_L]$. For simplicity, we call the $j^{th}$ worker in group $i$ as $w(i,j)$ for $i\in[L]$ and $j\in[n_i]$.
We implement a matrix-vector multiplication $\mtx{A}\vct{x}$ on this system, where $\mtx{A} \in \mathbb{R}^{m \times d}$ is a work matrix and  $\vct{x} \in \mathbb{R}^{d \times 1}$ is an input vector for some positive integers $m$ and $d$. Now, the work matrix $\mtx{A}$ is divided into equal-sized $k$ submatrices as $\mtx{A}=[\mtx{A}_1 ; \mtx{A}_2 ; \cdots ; \mtx{A}_k ]$, where $k$ is a positive integer that can divide $m$, and $\mtx{A}_{r} \in \mathbb{R}^{\frac{m}{k}\times d}$ for $r \in [k]$. 

The task of computing $\mtx{A}\vct{x}$ is distributed to $n$ workers as below. First, we define $\vct{k}=[k_1, k_2, \dots, k_L]$ as a \textit{task allocation} vector, where the elements are positive integers satisfying $\sum_{i=1}^L k_i=k$. The set of submatrices $\{\mtx{A}_r\}_{r=1}^k$ is now partitioned into $L$ disjoint subsets $\{\mathbb{S}_i\}_{i=1}^L$ such that $\abs{\mathbb{S}_i}=k_i$ holds for $i\in [L]$. We denote the elements in set $\mathbb{S}_i$ as $\mathbb{S}_i=\{\mtx{A}^{(i)}_{j}\}_{j=1}^{k_i}.$ Afterwards, the $k_i$ elements of $\mathbb{S}_i$ are encoded with an $(n_i,k_i)-$MDS code and we denote the set of $n_i$ coded submatrices by $\widetilde{\mathbb{S}}_i=\{\widetilde{\mtx{A}}_j^{(i)}\}_{j=1}^{n_i}$. Worker $w(i,j)$ now stores $\widetilde{\mtx{A}}_j^{(i)}$ and computes $\widetilde{\mtx{A}}_j^{(i)}\vct{x}$ when it receives the input vector $\vct{x}$ from the master. We call this coding scheme as an $(\vct{n},\vct{k})-$\textit{group code}, denoted by $C_\mathrm{G}(\vct{n},\vct{k})$. Fig. \ref{Fig:comp} illustrates an example of an $(\vct{n},\vct{k})-$group code when $\bm{n}=[3,4]$ and $\bm{k}=[2,3]$. The matrix $\mtx{A}=[\mtx{A}_1;\mtx{A}_2;\dots;\mtx{A}_5]$ is divided into two sets of submatrices, $\{\mtx{A}_1,\mtx{A}_2\}$ and $\{\mtx{A}_3,\mtx{A}_4,\mtx{A}_5\}$. Then, by applying a $(3,2)-$MDS code and a $(4,3)-$MDS code, respectively, we obtain $\{\widetilde{\mtx{A}}_1^{(1)},\widetilde{\mtx{A}}_2^{(1)},\widetilde{\mtx{A}}_3^{(1)}\}$ and $\{\widetilde{\mtx{A}}_1^{(2)},\widetilde{\mtx{A}}_2^{(2)},\widetilde{\mtx{A}}_3^{(2)},\widetilde{\mtx{A}}_4^{(2)}\}$. 
Each worker individually transmits its computational result $\widetilde{\mtx{A}}_j^{(i)}\vct{x}$  to the master when its computation is finished. To obtain the computational output $\mtx{A}\vct{x}$, the master needs at least $k_i$ computational results from each group $i$ to decode the $(n_i,k_i)-$MDS code. Note that this model can be directly applied to the matrix-matrix multiplication, where the input vector $\vct{x}$ is replaced by a matrix $\mtx{B} \in \mathbb{R}^{d \times c}$. 

We adopt the exponential distribution model for the \emph{completion time} of a worker, which is defined as the time taken for both the computation and the transmission of the computed result to the master. This model has also been assumed in other papers on coded computation \cite{ref:highd,ref:hier}. 
Unlike these papers, however, a worker in group $i$ has the distribution parameter of $\mu_i$, where $\mu_i$ varies among different groups. More precisely, the completion time $T_j^{(i)}$ of worker $w(i,j)$ is defined by its cumulative distribution function as
%\begin{equation*}
$\Pr[T_j^{(i)} \leq t] = 1-e^{k\mu_i  t}$
%\end{equation*}
for time $t \geq 0$. Here, the completion time has the rate of $k\mu_i$ since the number of rows in the submatrix $\mtx{A}_r \in \mathbb{R}^{\frac{m}{k} \times d}$ becomes smaller as $k$ increases.
%In accordance with the recent papers on distributed computing systems \cite{ref:highd},\cite{ref:hier}, the completion time of each worker is modeled as an exponential distribution. we adopt the exponential distribution following the straggler model of related papers \cite{ref:highd,ref:hier} on distributed computing systems.
%since it is well extended to general response time distributions as well as its simplicity. \cmt{paraphrase} \cmt{Why use exponential r.v.? - high-d:"We remark that the analysis can be easily extended to general response time distributions."} 

\subsection{Target Problem}
This paper mainly aims at analyzing the total execution time $T_{\mathrm{exec}}$ of $(\vct{n},\vct{k})-$group codes, which refers to the entire time taken for computing and decoding. The computing time $T_{\mathrm{comp}}$ is the time taken for the master to gather computational subtasks from the workers, while the decoding time $T_{\mathrm{dec}}$ is the time taken to recover the original task of computing $\mtx{A}\vct{x}$ from the gathered subtasks. In this paper, we assume that the encoding time complexity is negligible compared to $T_{\mathrm{comp}}$ and $T_{\mathrm{dec}}$. This is because we focus on the scenarios of multiplying varying input vectors with the same work matrix $\mtx{A}$, which is encoded once prior to the computation. Thus, we have 
\begin{equation*}
T_{\mathrm{exec}}(C)=T_{\mathrm{comp}}(C) +T_{\mathrm{dec}}(C),
\end{equation*}	
when code $C$ is applied to the system.
%Similar notations can be used for $T$ and $T_\mathrm{exec}$, respectively.

We focus on analyzing the computing time of $(\vct{n},\vct{k})-$group codes, which is denoted by $\TG$. Recall that the computing time of an $(\vct{n},\vct{k})-$group code is equivalent to the time when every group $i$ has at least $k_i$ workers which finish their tasks.
Let $T_{k_i:n_i}^{(i)}$ be the $k_i^{th}$ smallest value among $\{T_j^{(i)}\}_{j=1}^{n_i}$. Then, $\TG$ can be expressed as
\begin{equation*}
\TG = \max(T_{k_1:n_1}^{(1)}, T_{k_2:n_2}^{(2)}, \dots, T_{k_L:n_L}^{(L)}).
\end{equation*}
Since it is hard to find a closed-form expression for $\mathbb{E}[\TG]$ when $n$ is finite, 
%know the exact distribution of the maximum among multiple order statistics, 
we set our main problem as to obtain the expected value as $n$ goes to infinity, i.e.
\begin{equation*}\label{Eqn:MainProb}
\mathbb{P}_\mathrm{main}:	\mathrm{compute}	\lim_{n\to\infty} \mathbb{E}[\TG]. 
\end{equation*}
Here, we assume $k=\Theta(n)$ and $n_i=\Theta(n)$ for $i\in[L]$. 
%throughout the paper. 

\section{Optimal Computing Time Analysis
	%:\\ $(n,k)-$MDS code
}	

Here we find the optimal computing time of a given $(\vct{n}, \vct{\mu})-$group system. Theorem \ref{Thm:MDS_OPT} states that applying an $(n,k)-$MDS code achieves the optimal computing time.
%, where the MDS code is applied to the system as follows.
We consider an $(n,k)-$MDS code is applied to the $k$ submatrices $\{\mtx{A}_1, \mtx{A}_2, \dots, \mtx{A}_k\}$, resulting in $n$ coded submatrices $\{\widetilde{\mtx{A}}_1, \widetilde{\mtx{A}}_2, \dots, \widetilde{\mtx{A}}_n\}$. Then, the $n$ coded submatrices are distributed to $n$ workers regardless of the groups they belong. Here we denote the computing time of an $(n,k)-$MDS code as $\TM$. %Then, $\TM$ can be represented as the $k^{th}$ order statistic of total $n$ random variables, denoted as
%	\begin{equation*}
%	T_{\mathrm{comp\_MDS}}(\vct{n},k,\bm{\mu})=T_{k:n}.
%	\end{equation*}
%	
%Then, the following theorem shows that $\TM$ is the optimal computing time of $(\vct{n}, \vct{\mu})-$group model.
\begin{theorem}\label{Thm:MDS_OPT}
	Consider computing $k$ tasks on $(\vct{n}, \vct{\mu})-$group systems. 
	Then, an $(n,k)-$MDS code achieves the optimal computing time. In other words,  for arbitrary $(n,k)$ linear code $C \in \mathcal{C}(n,k)$, 
	\begin{align*}
	\TM \leq T_{\mathrm{comp}}(C).
	\end{align*}
\end{theorem}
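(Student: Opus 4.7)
The plan is to show that for any $(n,k)$ linear code, the master cannot possibly finish before $k$ workers have completed their computations, and that the MDS code attains this lower bound exactly. Since the group structure only affects which workers finish first (not the coding/decoding argument), the proof can proceed at the level of the unordered set of completed workers.

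First, I would characterize $T_{\mathrm{comp}}(C)$ for an arbitrary $(n,k)$ linear code $C$ as the first time at which the set $S(t) \subseteq [n]$ of workers whose completion times satisfy $T_j \leq t$ yields a decodable configuration, i.e.\ the first time $t$ such that $\{\widetilde{\mtx{A}}_j\}_{j \in S(t)}$ linearly spans $\{\mtx{A}_1,\dots,\mtx{A}_k\}$. (Here I'm collapsing the group indexing, since the MDS code in the theorem is applied across all $n$ workers ignoring groups.)

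Second, I would argue by a basic rank/dimension count that any decodable subset $S$ must satisfy $|S| \geq k$: recovering $k$ unknown submatrices from linear combinations requires at least $k$ linearly independent equations, hence at least $k$ coded results. Consequently, for every linear code $C \in \mathcal{C}(n,k)$,
\begin{equation*}
T_{\mathrm{comp}}(C) \;\geq\; T_{k:n},
\end{equation*}
where $T_{k:n}$ denotes the $k$-th order statistic of $\{T_j^{(i)} : i \in [L], j \in [n_i]\}$, the pooled completion times of all $n$ workers. This inequality is pointwise (sample-path), so it carries through in expectation as well.

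Third, I would show the MDS code saturates this bound: by the defining MDS property, every subset of $k$ coded symbols is a generator set for $\{\mtx{A}_1,\dots,\mtx{A}_k\}$, so as soon as \emph{any} $k$ workers finish, decoding is possible. Hence $\TM = T_{k:n}$, which combined with the lower bound yields $\TM \leq T_{\mathrm{comp}}(C)$ for every $C \in \mathcal{C}(n,k)$.

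The main obstacle is not really a technical one but a precision issue: care is needed when stating ``decodable'' for a general linear code, since the map from coded symbols back to the $\mtx{A}_r$'s depends on the generator matrix of $C$, and one has to be explicit that $k$ independent rows are required regardless of how clever the decoder is. Once this is pinned down, the remainder is immediate, and the fact that the $T_j^{(i)}$ have different rates across groups is irrelevant to the argument — the pointwise comparison on a common probability space takes care of the heterogeneous distributions automatically.
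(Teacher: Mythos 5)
Your proposal is correct and follows essentially the same route as the paper's proof: both establish the sample-path lower bound $T_{\mathrm{comp}}(C) \geq T_{k:n}$ for any $(n,k)$ linear code (you via a rank count on decodable subsets, the paper via the equivalent observation that more than $n-k$ erasures are uncorrectable) and then invoke the MDS property to get $\TM = T_{k:n}$. Your write-up merely makes explicit the decodability formalism that the paper leaves implicit; no substantive difference.
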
	
%The proof of lemma 
\begin{proof}
	Given an arbitrary realization of the completion times  $\{T_j^{(i)}\}_{i\in[L],j\in[n_i]}$ of workers, we can think of their order statistics $T_{1:n}<T_{2:n}<\dots<T_{n:n}$.
	%A linear code of length $n$ and rank $k$ whose minimum distance is $d$ can correct upto $d-1$ erasures. By Singleton bound, $d \leq n-k+1$. 
	Recall that $(n,k)$ linear code $C$ cannot recover the original message if there are more than $n-k$ erasures, which leads to $T_{\mathrm{comp}}(C)\geq T_{k:n}$. By the MDS property, we have $ \TM=T_{k:n}$, which completes the proof.
\end{proof}

\section{Computing Time Analysis}\label{section:MainRes}	

In this section, we provide the computing time analysis when the workers are dispersed into $L=2$ groups. 

\subsection{Computing Time for an Arbitrary Task Allocation $\bm{k}$}\label{Subsection:CompTimeL2}	

For simplicity, we denote the task allocation vector as  $\bm{k}=[k_1,k_2]=[k_1, k-k_1]$.
The computing time of an $(\bm{n},\bm{k})-$group code for $L=2$ can be expressed as  $\TG=\max(T^{(1)}_{k_1:n_1}, T^{(2)}_{k_2:n_2})$ by definition. Lemma \ref{Lemma:OrdBehavior} provides 	
the expected computing time of an $(\bm{n},\bm{k})-$group code when $n$ goes to infinity.
\begin{lemma} \label{Lemma:OrdBehavior}
	Consider an $(\bm{n},\bm{\mu})-$group system with $L=2$ groups. Then, the expected computing time of an $(\bm{n},\bm{k})-$group code satisfies the following:
	\begin{align}
	\lim_{n\to\infty} &\mathbb{E}[\TG]=\lim_{n\to\infty} \mathbb{E}[\max(T_{k_1:n_1}^{(1)},T_{k_2:n_2}^{(2)})]\nonumber\\
	=&\max(\lim_{n\to\infty}\mathbb{E}[T_{k_1:n_1}^{(1)}],\lim_{n\to\infty}\mathbb{E}[ T_{k_2:n_2}^{(2)}])\nonumber \\
	=&\max\left(-\dfrac{1}{k\mu_1}\log(1-\dfrac{k_1}{n_1}),-\dfrac{1}{k\mu_2}\log(1-\dfrac{k_2}{n_2})\right)\label{Eqn:maxTerm} .
	\end{align} 
\end{lemma}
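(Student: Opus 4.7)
The first equality in the lemma is immediate from the definition $\TG=\max(T^{(1)}_{k_1:n_1}, T^{(2)}_{k_2:n_2})$ recalled just above the statement, so I focus on the remaining two equalities. The plan is to (i) evaluate each marginal expected order statistic $\mathbb{E}[T^{(i)}_{k_i:n_i}]$ in closed form using a standard exponential-order-statistic identity, and (ii) justify interchanging the maximum with the expectation by establishing a concentration bound on each $T^{(i)}_{k_i:n_i}$ around its mean.

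For step (i), I would invoke R\'enyi's representation: if $E_1,\ldots,E_{n_i}$ are i.i.d.\ standard exponentials, then $T^{(i)}_{k_i:n_i}$ has the same distribution as $\frac{1}{k\mu_i}\sum_{j=n_i-k_i+1}^{n_i}\frac{E_j}{j}$. Taking expectations yields $\mathbb{E}[T^{(i)}_{k_i:n_i}] = \frac{1}{k\mu_i}(H_{n_i}-H_{n_i-k_i})$, where $H_N$ is the $N$-th harmonic number. Under the scaling $k_i, n_i = \Theta(n)$, the expansion $H_N = \log N + \gamma + O(1/N)$ gives $H_{n_i}-H_{n_i-k_i} = -\log(1-k_i/n_i) + o(1)$, producing the closed-form expression on the last line of the lemma.

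For step (ii), I would use the elementary inequality $\lvert\max(a,b)-\max(c,d)\rvert \le \lvert a-c\rvert+\lvert b-d\rvert$ applied pointwise to $(T^{(1)}_{k_1:n_1}, T^{(2)}_{k_2:n_2})$ and to the deterministic pair $(m_1,m_2)$ with $m_i := \mathbb{E}[T^{(i)}_{k_i:n_i}]$. Taking expectations and applying Jensen's inequality gives
\[
\bigl|\mathbb{E}[\max(T^{(1)}_{k_1:n_1}, T^{(2)}_{k_2:n_2})]-\max(m_1,m_2)\bigr| \le \sqrt{\mathrm{Var}(T^{(1)}_{k_1:n_1})}+\sqrt{\mathrm{Var}(T^{(2)}_{k_2:n_2})}.
\]
The R\'enyi representation again yields $\mathrm{Var}(T^{(i)}_{k_i:n_i}) = \frac{1}{k^2\mu_i^2}\sum_{j=n_i-k_i+1}^{n_i}j^{-2} = O(n^{-3})$, so the right-hand side is $O(n^{-3/2})$, while $\max(m_1,m_2)=\Theta(n^{-1})$. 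The ratio $\mathbb{E}[\max(T^{(1)}_{k_1:n_1},T^{(2)}_{k_2:n_2})]/\max(m_1,m_2)$ therefore tends to $1$ as $n\to\infty$, which is exactly the asymptotic equality the lemma asserts.

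The main obstacle is tracking the correct asymptotic order of the variance relative to the mean. The trivial lower bound $\mathbb{E}[\max]\geq \max(\mathbb{E})$ is easy, but the matching upper bound fails for generic random variables; it succeeds here only because the coefficient of variation of each order statistic decays like $n^{-1/2}$. Once this concentration is in hand, the interchange with the maximum is automatic, and what remains is just a standard harmonic-number expansion.
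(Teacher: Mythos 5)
Your proof is correct, but it takes a genuinely different route from the paper's. The paper invokes a Bahadur-type representation $T_{k:n}=\xi-(\widetilde F_n(\xi)-k/n)/f(\xi)+R_n$ and the resulting asymptotic normality of central order statistics to show that $T^{(1)}_{k_1:n_1}-T^{(2)}_{k_2:n_2}$ converges in probability to $\xi^{(1)}-\xi^{(2)}$; it then argues that the indicator of which group finishes last stabilizes, and finally interchanges the limit with the expectation, justifying the interchange only by non-negativity of the random variables (which by itself yields only Fatou's inequality --- the weakest step of the paper's argument). You instead exploit the exponential model directly via R\'enyi's representation, which gives the exact mean $\frac{1}{k\mu_i}(H_{n_i}-H_{n_i-k_i})$ and exact variance, and you control $\lvert\mathbb{E}[\max]-\max(\mathbb{E})\rvert$ through the Lipschitz property of $\max$ plus a second-moment bound. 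Your approach buys an elementary, fully quantitative statement (an explicit $O(n^{-1/2})$ relative error rather than a bare limit), and it sidesteps both the delicate limit--expectation interchange and the degenerate case $\xi^{(1)}=\xi^{(2)}$, where the paper's indicator argument does not literally converge; the cost is that it is wedded to the exponential distribution, whereas the paper's quantile-based argument extends in principle to general absolutely continuous completion-time laws. Two points you should make explicit: the variance bound $O(n^{-3})$ and the expansion $H_{n_i}-H_{n_i-k_i}=-\log(1-k_i/n_i)+o(1)$ both require $n_i-k_i=\Theta(n)$, i.e.\ $k_i/n_i$ bounded away from $1$, which is implicit in the paper's scaling assumptions; and, as in the paper itself, the asserted ``limit'' must be read as asymptotic equivalence of two $\Theta(1/n)$ quantities, which is precisely what your ratio-tends-to-one formulation delivers.
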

\begin{proof}
	We set aside the proof at Appendix \ref{proof:lemmel2}.
\end{proof}
This lemma illustrates that in the asymptotic regime of large $n$, the expected computing time of an $(\bm{n},\bm{k})-$group code can be easily obtained for given $\vct{n}$, $\vct{\mu}$ and $k$.

%This lemma illustrates that in the asymptotic regime of large $n$, the expected computing time of an $(\bm{n},\bm{k})-$group code is determined by the expected time at which the slower group finishes its tasks.

%the converges to maximum of their expectation values.
%This lemma indicates that, in asymptotic region of large $n$, the computing time can be analyzed by the time taken for computing $k_1^*$ tasks in group $1$, which is the $k_1^{*th}$ order statistic of $n_1$ i.i.d. random variables.
Now, we aim at optimizing task allocation rule $\bm{k}$ which minimizes the computing time of an $(\vct{n}, \vct{k})-$group code. We define the optimal task allocation vector by
\begin{equation}\label{Eqn:kStar}
\bm{k}^* \coloneqq \underset{\bm{k}}{\arg\min} \mathbb{E}[\TG],
\end{equation}
whose elements are denoted by $\bm{k}^*=[k_1^*,k_2^*,\dots,k_L^*]$. 
Before finding $\vct{k}^*$, 
%Before finding $k^*$ in Theorem \ref{Lemma:ExecTime_L2}, 
we state a relationship between $T_{k:n}$ and $\{T_{k_i:n_i}^{(i)}\}_{i=1}^L$ in the following Lemma. Recall that $T_{k:n}$ is equivalent to $\TM$, and the maximum among $\{T_{k_i:n_i}^{(i)}\}_{i=1}^L$ corresponds to $\TG$ by definition.

%We begin with finding upper and lower bounds on $T_{k:n}=\TM$ as in Proposition \ref{proposition:Bounds_L2}. For simplicity, we eliminate $k_2$ by replacing it with $k-k_1$ in this subsection.
\begin{lemma}\label{proposition:Bounds_L2}
	Under the scenario of computing $k$ tasks on an $(\bm{n},\bm{\mu})-$group system with $L=2$ groups, consider applying an $(\bm{n},\bm{k})-$group code where $\bm{n}=[n_1,n_2]$ and $\bm{k}=[k_1,k-k_1]$.
	Given an arbitrary realization of completion time $\{T_j^{(i)}\}_{i \in [2], j \in [n_i]}$ of workers, let $T_{k:n}$ be the $k^{th}$ smallest value among $\{T_j^{(i)}\}_{i\in[2],j\in[n_i]}$. Meanwhile, $T_{k_i:n_i}^{(i)}$ denotes the $k_i^{th}$ smallest value among $\{T_j^{(i)}\}_{j=1}^{n_i}$. Then, we have
	\begin{equation}\label{Eqn:prop1}
	\min(T_{k_{1}:n_{1}}^{(1)},T_{k-k_{1}:n_{2}}^{(2)}) \hspace{-1mm}\le T_{k:n} \hspace{-1mm}\le \max(T_{k_{1}:n_{1}}^{(1)},T_{k-k_{1}:n_{2}}^{(2)}).
	\end{equation}
\end{lemma}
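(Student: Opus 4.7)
The plan is to prove both inequalities in \eqref{Eqn:prop1} by a purely deterministic counting argument on the given realization of completion times; no additional distributional input is needed beyond the hypothesis. Write $T^{\ast} := \max(T_{k_1:n_1}^{(1)}, T_{k-k_1:n_2}^{(2)})$ and $T_{\ast} := \min(T_{k_1:n_1}^{(1)}, T_{k-k_1:n_2}^{(2)})$ for the two quantities that are supposed to bracket $T_{k:n}$.

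For the upper bound $T_{k:n}\le T^{\ast}$, I would count how many workers have completion time at most $T^{\ast}$. By the very definition of the intra-group order statistic $T_{k_i:n_i}^{(i)}$, at least $k_i$ workers in group $i$ have completion time no larger than $T_{k_i:n_i}^{(i)}$, and hence no larger than $T^{\ast}$. Summing over $i\in\{1,2\}$ produces at least $k_1+(k-k_1)=k$ workers who have finished by time $T^{\ast}$, which immediately implies $T_{k:n}\le T^{\ast}$.

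For the lower bound $T_{\ast}\le T_{k:n}$, I would apply the symmetric idea but with strict inequalities: count completions strictly before $T_{\ast}$. Because both $T_{k_1:n_1}^{(1)}\ge T_{\ast}$ and $T_{k-k_1:n_2}^{(2)}\ge T_{\ast}$, strictly fewer than $k_1$ workers of group $1$ and strictly fewer than $k-k_1$ workers of group $2$ can have finished strictly before $T_{\ast}$. Adding these gives at most $(k_1-1)+(k-k_1-1)=k-2$ workers finished before $T_{\ast}$, which is strictly less than $k$, forcing $T_{k:n}\ge T_{\ast}$.

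The main subtlety, and essentially the only delicate point, is the switch between \emph{at most} $T^{\ast}$ (non-strict) and \emph{strictly before} $T_{\ast}$ (strict), which could matter only if several completion times happened to coincide. Under the continuous exponential model used in the paper, such ties occur with probability zero, and since the lemma's inequalities are themselves non-strict, any ambiguity is absorbed by taking a left-limit in the definition of ``before $T_{\ast}$''. Apart from this bookkeeping, the argument is entirely combinatorial and needs nothing beyond the definitions of the order statistics involved.
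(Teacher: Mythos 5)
Your argument is correct. Both bounds follow from the deterministic counting you describe: at least $k_1+(k-k_1)=k$ workers finish by $\max(T_{k_1:n_1}^{(1)},T_{k-k_1:n_2}^{(2)})$, and at most $(k_1-1)+(k-k_1-1)<k$ finish strictly before $\min(T_{k_1:n_1}^{(1)},T_{k-k_1:n_2}^{(2)})$; note the strict-count step is already airtight even with ties, so your closing caveat about coinciding completion times is unnecessary rather than a gap. The paper's proof is also a deterministic counting argument, but it is organized in the opposite direction: it fixes $T_{k:n}$, defines $k_1'$ as the number of group-$1$ workers finishing by $T_{k:n}$, and then splits into the three cases $k_1'<k_1$, $k_1'>k_1$, $k_1'=k_1$ to locate $T_{k:n}$ between the two group order statistics. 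Your version buys two things: it dispenses with the case analysis entirely, and it extends verbatim to $L$ groups (sum the counts over $i\in[L]$), whereas the paper proves the general-$L$ statement (Lemma \ref{Lemma:Bounds_L}) by a separate nested min/max induction on $L$ in Appendix \ref{proof:prop:boundL}. The paper's formulation, in exchange, makes explicit which of the two group statistics sits on which side of $T_{k:n}$ in each case, which is mildly more informative than the bare sandwich but is not needed for anything downstream.
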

\begin{proof}
	%By definition, we only have $k_1 \in (\max(0,k-n_2),\min (k,n_1))$ because $k_1\leq n_1$ and $k-k_1\leq n_2$
	%Among the $k$ tasks finished earliest, let $k_1'$ be the number of tasks completed by workers in group $1$. Accordingly, the number of tasks done by workers in group $2$ is $k-k_1'$. Then, we may write $T_{k-k_1'-1:n_2}^{(2)}<T_{k:n}<T_{k_1'+1:n_1}^{(1)}$. 
	Let 
	%$\mathbb{U}$ be a set of completion times such that 
	$\mathbb{U}=\{T_{j}^{(i)}:T_{j}^{(i)}\leq T_{k:n} \mathrm{ for } i\in[2], j\in[n_i]\}$. Consider a subset $\mathbb{U}_1$ of set $\mathbb{U}$ such that $\mathbb{U}_1=\{T_{j}^{(1)}:T_{j}^{(1)}\le T_{k:n} \mathrm{ for }  j\in[n_1]\}$ and its complementary set $\mathbb{U}_1^C=\{T_{j}^{(2)}:T_{j}^{(2)}\le T_{k:n} \mathrm{ for }  j\in[n_2]\}$. 
	%Similarly, we define $\mathbb{U}_2= \{T_{j}^{(2)}:T_{j}^{(2)}\le T_{k:n} \mathrm{for}  j\in[n_i]\}$. 
	Here, we define $k_1'\coloneqq\abs{\mathbb{U}_1}$. Notice that $\abs{\mathbb{U}_1^C}=k-k_1'$. Then, we may write $T_{k-k_1'-1:n_2}^{(2)}<T_{k:n}<T_{k_1'+1:n_1}^{(1)}$.
	When $k_1'<k_1$, 
	%$T_{k_{1}:n_{1}}^{(1)}$ is greater than or equal to $T_{k:n}$ because 
	we have $T_{k:n}<T_{k_1'+1:n_1}^{(1)}\leq T_{k_1:n_1}^{(1)}$. Similarly, we have $T_{k:n}>T_{k-k_1'-1:n_2}^{(2)}\geq T_{k-k_1:n_2}^{(2)}$, which leads to $ T_{k-k_1:n_2}^{(2)}\leq T_{k:n}\leq T_{k_{1}:n_{1}}^{(1)}$. When $k_1'>k_1$, we have $T_{k_{1}:n_{1}}^{(1)} \leq T_{k:n}\leq T_{k-k_1:n_2}^{(2)}$ using the same method as above. For $k_1'=k_1$, it is obvious that	$\min(T_{k_{1}:n_{1}}^{(1)},T_{k-k_{1}:n_{2}}^{(2)}) < T_{k:n} = \max(T_{k_{1}:n_{1}}^{(1)},T_{k-k_{1}:n_{2}}^{(2)})$. This completes the proof.\qedhere		
\end{proof}
%\vspace{-1mm}
In the following theorem, we find the optimal task allocation $\vct{k}^*$, and show that the expected computing time of an $(\bm{n},\bm{k}^*)-$group code converges to that of an $(n,k)-$MDS code for sufficiently large $n$. 
% which is shown to be the optimal scheme in the sense of computing time in Theorem \ref{Thm:MDS_OPT}.

\begin{theorem} \label{Lemma:ExecTime_L2}
	Consider a scenario of computing $k$ tasks on an $(\bm{n},\bm{\mu})-$group system with $L=2$ groups, where an $(\bm{n},\bm{k})-$group code is applied.
	In the asymptotic regime of large $n$, the optimal task allocation $\bm{k}^*=[k_1^*,k-k_1^*]$ can be obtained\footnote{Here we assume that $k_1^*$ is an integer since the task allocation vector $\vct{k}$ consists of integers. However, in case of $k_1^*$ not an integer, the optimal allocation rule is either $\vct{k}=[\ceil{k_1^*}, k-\ceil{k_1^*}]$ or $\vct{k}=[\floor{k_1^*}, k-\floor{k_1^*}]$, since $\mathbb{E}[\TG]$ is a convex function of $k_1$, as in the proof.}
	by solving
	%is a real\footnote{It is safe to extend $k_1$ from a natural number to a positive real number when we deal with order statistics because an order statistic is a random variable defined for a real number $r\in(0,1)$ and $r^{th}$ population quantile $\xi_{r}$. Although $k^{th}$ order statistic out of $n$ random variables is defined under the condition that $k/n\to r$ as $n\to\infty$ in general, $k$ does not have to be confined to being a natural number.} vector satisfying 
	%	the following equation:
	\begin{equation}\label{eqn:OptKL2}
	%		\lim_{n\to\infty} \mathbb{E}[T_{k_1^*:n_1}^{(1)}] = \lim_{n\to\infty} \mathbb{E}[T_{k-k_1^*:n_2}^{(2)}].
	%\frac{1}{\mu_1}\log{\left(1-\frac{k_1^*}{n_1}\right)}=\frac{1}{\mu_2}\log{\left(1-\frac{k-k_1^*}{n_2}\right)}
	k_1^* + n_2 -  n_2 \left(1-\frac{k_1^*}{n_1}\right)^{\frac{\mu_2}{\mu_1}} =k.
	\end{equation}
	Moreover, the expected computing time of an $(\bm{n},\bm{k}^*)-$group code satisfies the following:
	\begin{equation}\label{Eqn:TGOTM}	
	\lim_{n\to\infty} \mathbb{E} [\TGO]=\hspace{-1mm}\lim_{n\to\infty} \mathbb{E} [\TM].
	\end{equation} 
	
\end{theorem}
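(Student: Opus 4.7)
The plan is to split the theorem into two pieces: first solve the one-dimensional optimization that Lemma \ref{Lemma:OrdBehavior} reduces us to, in order to isolate $k_1^*$, and then sandwich $\lim_{n\to\infty}\mathbb{E}[\TGO]$ between upper and lower bounds that both coincide with $\lim_{n\to\infty}\mathbb{E}[\TM]$.

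For the first piece, I would substitute $k_2 = k-k_1$ into the closed form from Lemma \ref{Lemma:OrdBehavior} and define
\[
g_1(k_1) = -\tfrac{1}{k\mu_1}\log\bigl(1-\tfrac{k_1}{n_1}\bigr), \quad g_2(k_1) = -\tfrac{1}{k\mu_2}\log\bigl(1-\tfrac{k-k_1}{n_2}\bigr).
\]
Since $g_1$ is strictly increasing and convex in $k_1$ while $g_2$ is strictly decreasing and convex, $\max(g_1,g_2)$ is itself convex and its unique minimizer is the crossing point $g_1(k_1^*) = g_2(k_1^*)$. Exponentiating this identity and raising to the power $1/\mu_1$ rearranges immediately into \eqref{eqn:OptKL2}. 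The convexity of $\max(g_1,g_2)$ simultaneously justifies the footnote's remark that the best integer allocation is obtained by rounding $k_1^*$ either up or down.

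For the equivalence \eqref{Eqn:TGOTM} I would set up a squeeze. One direction is immediate: Theorem \ref{Thm:MDS_OPT} gives $\mathbb{E}[\TM] \le \mathbb{E}[\TGO]$ for every $n$, hence also in the limit. For the reverse, I would invoke Lemma \ref{proposition:Bounds_L2} at the choice $k_1 = k_1^*$, which yields the pointwise inequality $T_{k:n} \ge \min(T_{k_1^*:n_1}^{(1)}, T_{k-k_1^*:n_2}^{(2)})$. Taking expectations and letting $n \to \infty$ reduces the task to showing that the expected minimum of those two order statistics converges to the same limit as the expected maximum, namely $\lim_{n\to\infty}\mathbb{E}[\TGO]$.

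The main obstacle is this last step, which is essentially an analog of Lemma \ref{Lemma:OrdBehavior} for the minimum rather than the maximum. My shortcut would be to avoid any new concentration argument by invoking the identity $\min(X,Y) + \max(X,Y) = X + Y$: taking expectations rearranges to $\mathbb{E}[\min] = \mathbb{E}[X] + \mathbb{E}[Y] - \mathbb{E}[\max]$, and every term on the right has a known limit from Lemma \ref{Lemma:OrdBehavior} and the tools in its proof. Because $k_1^*$ was chosen so that $g_1(k_1^*) = g_2(k_1^*)$, arithmetic forces $\lim \mathbb{E}[\min] = \lim \mathbb{E}[\max] = \lim_{n\to\infty}\mathbb{E}[\TGO]$, and the squeeze between this and Theorem \ref{Thm:MDS_OPT} closes the argument.
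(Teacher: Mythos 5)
Your proposal is correct and follows essentially the same route as the paper: reduce to minimizing $\max(g_1,g_2)$ via Lemma~\ref{Lemma:OrdBehavior}, locate $k_1^*$ at the crossing point to get \eqref{eqn:OptKL2}, and squeeze $\lim_{n\to\infty}\mathbb{E}[T_{k:n}]$ between the min and max bounds of Lemma~\ref{proposition:Bounds_L2}, which coincide at $k_1=k_1^*$. The only (minor, and valid) variation is your use of the identity $\mathbb{E}[\min]=\mathbb{E}[X]+\mathbb{E}[Y]-\mathbb{E}[\max]$ to handle the lower bound, where the paper instead observes that the concentration argument in the proof of Lemma~\ref{Lemma:OrdBehavior} applies verbatim to the minimum.
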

\begin{proof}
	%The formal proof is written at Appendix \ref{proof:lemma:ExecTime_L}, but here we provide the sketch of the proof. 
	Combining \eqref{Eqn:maxTerm} and \eqref{Eqn:kStar}, we obtain
	\begin{align*}
	\lim_{n\to\infty} k_1^*
	&=\underset{k_1 \in [k]}{\arg\min} \Big\{\lim_{n\to\infty}\max(\mathbb{E}[T_{k_1:n_1}^{(1)}],\mathbb{E}[ T_{k-k_1:n_2}^{(2)}])\Big\}\\ 
	&=\underset{k_1 \in [k]}{\arg\min}\Big\{\max\Big(-\dfrac{1}{k\mu_1}\log(1-\dfrac{k_1}{n_1}),\\
	&\hspace{30mm}-\dfrac{1}{k\mu_2}\log(1-\dfrac{k-k_1}{n_2})\Big)\Big\}.
	\end{align*}
	Note that the first variable of the max function is a strictly increasing convex function of $k_1$, while the second one is a strictly decreasing convex function. Thus, taking the maximum of the two variables results in a convex function of $k_1$. Therefore, as $n$ grows to infinity, the minimizer $k_1^*$ coincides with the intersection point of the two functions, 
	i.e.,
	\begin{equation}
	\lim_{n\to\infty}\mathbb{E}[T_{k_1^*:n_1}^{(1)}]=\lim_{n\to\infty}\mathbb{E}[ T_{k-k_1^*:n_2}^{(2)}].\label{Eqn:limExp}
	\end{equation}
	From \eqref{Eqn:maxTerm} and \eqref{Eqn:limExp}, we obtain \eqref{eqn:OptKL2} by simple algebraic operations.
	Now we move on to the proof of \eqref{Eqn:TGOTM}. First, by taking $\Lim{n \to \infty}\mathbb{E}[\cdot]$ on \eqref{Eqn:prop1} and applying Lemma \ref{Lemma:OrdBehavior}, we obtain
	\begin{align*}
	\min \Big( \lim_{n\to\infty}\mathbb{E}[ T_{k_1:n_1}^{(1)}]& ,\lim_{n\to\infty}\mathbb{E}[ T_{k-k_1:n_2}^{(2)}] \Big) \leq \lim_{n\to\infty} \mathbb{E} [T_{k:n}] \\
	\le \max &\left(\lim_{n\to\infty}\mathbb{E}[ T_{k_1:n_1}^{(1)}],\lim_{n\to\infty}\mathbb{E}[T_{k-k_1:n_2}^{(2)}]\right).
	\end{align*}
	When $k_1 = k_1^*$, the upper and lower bounds have the same value as in \eqref{Eqn:limExp}. Thus, by squeeze theorem, we have
	\begin{align*}
	\lim_{n\to\infty} \mathbb{E} [T_{k:n}]=\lim_{n\to\infty}\mathbb{E}[ T_{k_1^*:n_1}^{(1)}]=\lim_{n\to\infty}\mathbb{E}[ T_{k-k_1^*:n_2}^{(2)}].
	\end{align*}
	Therefore, we obtain \eqref{Eqn:TGOTM} by using $\TM=T_{k:n}$ and $\TG=\Max{i \in [L]} T_{k_i:n_i}^{(i)}$.
\end{proof}

Recall that an $(n,k)-$MDS code achieves the optimal computing time as stated in Theorem \ref{Thm:MDS_OPT}. The above theorem implies that an $(\vct{n}, \vct{k})-$group coded system can asymptotically achieve the optimal computing time by using the optimal task allocation rule $\vct{k}=\vct{k}^*$.
%be minimized down to that of $(n,k)-$MDS code 
%by carefully allocating tasks over groups according to the given resources $\vct{n}$ and $\vct{\mu}$, i.e. 
Note that \eqref{eqn:OptKL2} can be easily solved when $\mu_1/\mu_2=2$ by using the quadratic formula. The following corollary provides the optimal task allocation $\bm{k}^*$ and the corresponding $\mathbb{E}[\TGO]$ when $\mu_1 = 2\mu_2$.
\begin{corollary}
	Consider the scenario of computing $k$ tasks on an $(\bm{n},\bm{\mu})-$group system with $L=2$ and $\bm{\mu}=[2\mu_2,\mu_2]$. 	
	Under the scenario of applying an $(\bm{n},\bm{k})-$group code on this system, the optimal task allocation $\bm{k}^*=[k_1^*, k-k_1^*]$ is obtained as   
	\begin{equation}\label{Corollary:statement1}
	k_1^*= k-n_2-\frac{n_2^2}{2n_1}+\sqrt{(n_2+\frac{n_2^2}{2n_1})^2-\frac{k}{n_1}n_2^2}.
	\end{equation}
	Moreover, the expected value of the corresponding computing time $\mathbb{E}[\TGO]$ can be calculated as 
	\begin{align}
	\lim_{n\to\infty} &\mathbb{E}[\TGO]\nonumber\\ 
	&= \frac{1}{k\mu_2}\log{\left(\sqrt{(1+\frac{n_2}{2n_1})^2-\frac{k}{n_1}}-\frac{n_2}{2n_1}\right)}^{-1}. \label{Eqn:LETGO}
	\end{align}
\end{corollary}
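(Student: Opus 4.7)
The plan is to specialize Theorem \ref{Lemma:ExecTime_L2} to the case $\bm{\mu}=[2\mu_2,\mu_2]$ and then simplify the resulting expressions. First, I would substitute $\mu_2/\mu_1=1/2$ into the characterizing equation \eqref{eqn:OptKL2}, obtaining
\begin{equation*}
k_1^* + n_2 - n_2\sqrt{1 - \frac{k_1^*}{n_1}} = k.
\end{equation*}
Rearranging as $n_2\sqrt{1 - k_1^*/n_1} = k_1^* + n_2 - k$ and squaring both sides produces a quadratic in $k_1^*$. The quadratic formula, together with the sign choice required to ensure $k_1^*\le k$ (equivalently, $k_1^* + n_2 - k \ge 0$ so that the squaring step is reversible), then yields the closed form in \eqref{Corollary:statement1}.

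For the second claim I would exploit the key fact established inside the proof of Theorem \ref{Lemma:ExecTime_L2}: at $k_1=k_1^*$ the two arguments of the max in \eqref{Eqn:maxTerm} coincide. Therefore
\begin{equation*}
\lim_{n\to\infty}\mathbb{E}[\TGO] = -\frac{1}{k\mu_1}\log\!\Big(1-\frac{k_1^*}{n_1}\Big) = -\frac{1}{2k\mu_2}\log\!\Big(1-\frac{k_1^*}{n_1}\Big) = \frac{1}{k\mu_2}\log\bigl(\sqrt{1-k_1^*/n_1}\bigr)^{-1}.
\end{equation*}
So it remains to identify $\sqrt{1 - k_1^*/n_1}$ with $\sqrt{(1+n_2/(2n_1))^2 - k/n_1} - n_2/(2n_1)$, and then \eqref{Eqn:LETGO} follows immediately.

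The principal obstacle is this algebraic simplification. The plan is, first, to factor $n_2^2$ out of the square-root term in \eqref{Corollary:statement1} and divide by $n_1$, giving the identity $\tfrac{1}{n_1}\sqrt{(n_2 + n_2^2/(2n_1))^2 - kn_2^2/n_1} = \tfrac{n_2}{n_1}\sqrt{(1+n_2/(2n_1))^2 - k/n_1}$. Plugging this into $1 - k_1^*/n_1$ gives
\begin{equation*}
1 - \frac{k_1^*}{n_1} = 1 - \frac{k}{n_1} + \frac{n_2}{n_1} + \frac{n_2^2}{2n_1^2} - \frac{n_2}{n_1}\sqrt{\Big(1+\frac{n_2}{2n_1}\Big)^{\!2}-\frac{k}{n_1}}.
\end{equation*}
The final step is to verify by direct expansion that the right-hand side equals $\bigl(\sqrt{(1+n_2/(2n_1))^2 - k/n_1} - n_2/(2n_1)\bigr)^2$, which reduces to matching the constant, cross, and square terms; taking square roots (with the positive branch, since $k_1^*\le n_1$) then delivers the desired identity and completes the derivation of \eqref{Eqn:LETGO}.
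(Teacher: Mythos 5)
Your proposal is correct and follows essentially the same route as the paper, which simply states that \eqref{eqn:OptKL2} reduces to \eqref{Corollary:statement1} when $\mu_1=2\mu_2$ and that substituting \eqref{Corollary:statement1} into \eqref{Eqn:maxTerm} yields \eqref{Eqn:LETGO}; you merely spell out the quadratic-formula and sign-selection details that the paper leaves implicit. The algebra checks out, including the identification of $1-k_1^*/n_1$ with $\bigl(\sqrt{(1+n_2/(2n_1))^2-k/n_1}-n_2/(2n_1)\bigr)^2$ and the choice of branch (both reduce to $n\ge k$).
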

\begin{proof}
	When $\mu_1 = 2\mu_2$, the equation \eqref{eqn:OptKL2} reduces to \eqref{Corollary:statement1}. In addition, inserting \eqref{Corollary:statement1} into \eqref{Eqn:maxTerm} results in \eqref{Eqn:LETGO}.
\end{proof}

\subsection{Numerical Results when the Number of Nodes are Finite}

\begin{figure}
	\centering
	\includegraphics[width=55mm]{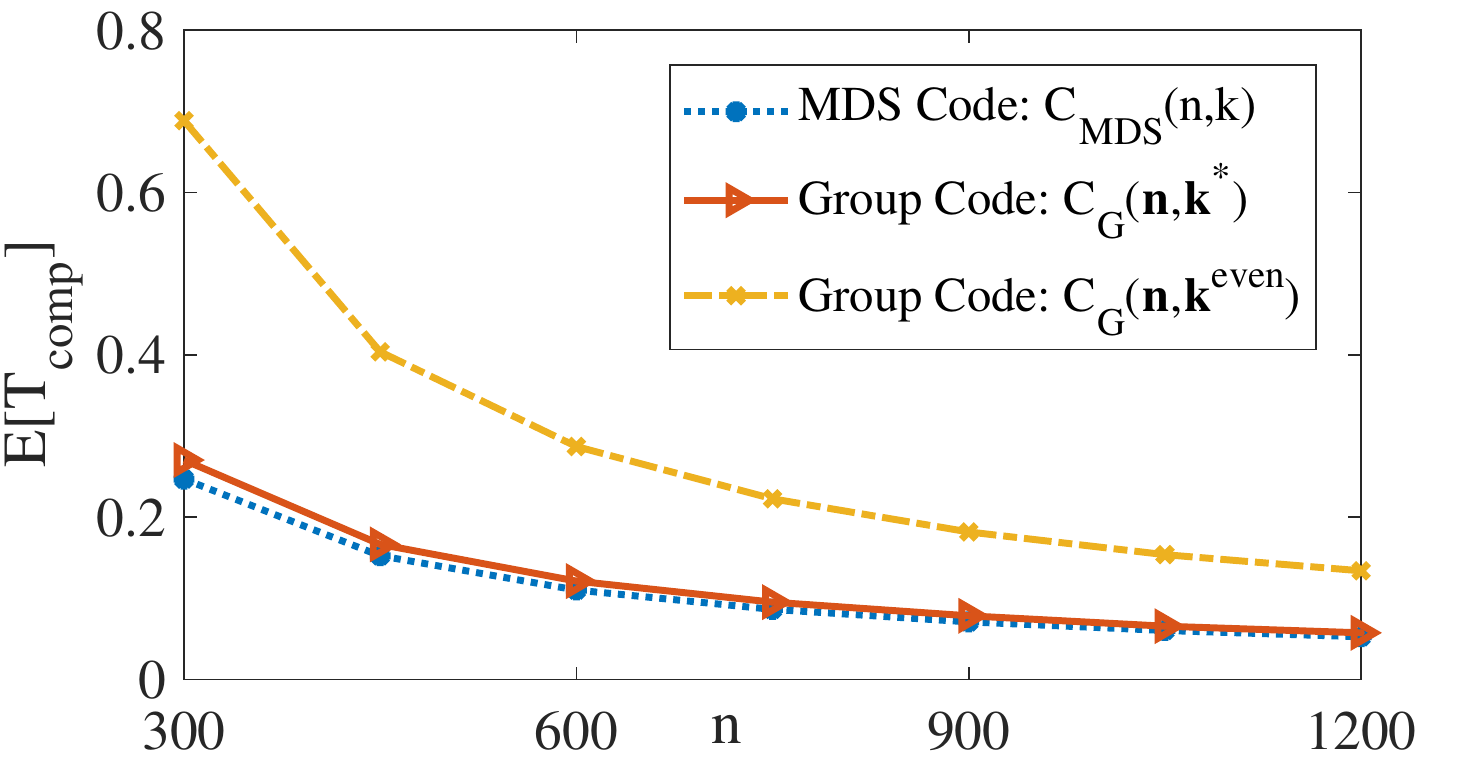}
	\caption{Simulated average computing time $\mathbb{E}[T_{\mathrm{comp}}]$ of an MDS code and two types of group codes. Parameters are set to $(\vct{n}, \vct{\mu})=\left([\frac{3}{4}n,\frac{1}{4}n],[1,2]\right)$ and $k=100$.}
	\label{Fig:simul_optimal_k*}
\end{figure}

Here, we provide simulation results on the computing time of an $(\vct{n}, \vct{k})-$group code when the number of nodes $n$ is finite. 
%The mathematical analysis conducted in previous subsections consider the asymptotic regime of large $n$, whereas 
%the following numerical results show that the suggested optimal task assignment rule $\vct{k}^*$ has near-optimal performance for finite $n$.
Fig. \ref{Fig:simul_optimal_k*} illustrates the expected computing time of an $(n,k)-$MDS code $\mathbb{E}[\TM]$ and that of $(\bm{n},\bm{k})-$group code $\mathbb{E}[\TG]$, for various $n$. We consider two types of group codes: one with the optimal task allocation $\vct{k}^*=[k_1^*, k-k_1^*]$, and the other with an even task allocation $\vct{k}^{\mathrm{even}}=[\frac{1}{2}k, \frac{1}{2}k]$.
%$C_\mathrm{MDS}(n,k)$  $C_G(\vct{n}, \vct{k})$
For a fixed number of tasks $k=100$, we assume that $n$ workers are divided into two groups as $\vct{n}=[n_1, n_2]=[\frac{3}{4}n,\frac{1}{4}n]$. Moreover, the average computing time of a worker doubles in the first group, i.e., $\vct{\mu}=[\mu_1, \mu_2] = [1,2]$. For the estimation, we employ Monte Carlo methods with $10^4$ random samples.
The simulation result demonstrates that the expected computing time of an $(\bm{n},\bm{k}^*)-$group code approaches to that of an $(n,k)-$MDS code in the asymptotic regime of large $n$, as proved in Theorem \ref{Lemma:ExecTime_L2}.
Moreover, the average computing times of two group codes $-$ the optimal group code $C_\mathrm{G}(\vct{n},\vct{k}^*)$ and a naive group code $C_\mathrm{G}(\vct{n},\vct{k}^{\mathrm{even}})$ $-$ have a significant gap, which supports the necessity of a careful task allocation considering the heterogeneity of groups.
% in order to reduce the expected computation time of a group code.
%Moreover, the simulation results for two group codes, $C_\mathrm{G}(\vct{n}, \vct{k}^*)$ and $C_\mathrm{G}(\vct{n}, \vct{k}^{\mathrm{even}})$, show that the reduction of 
%\cmt{The yellow dashed line, $\mathbb{E}[\TGE]$, indicates that the computing time can be significantly degraded when the equal amount of tasks are assigned to the distinct groups without considering their different performances.}

%\begin{figure}[!t]
%	\centering
%	\includegraphics[width=55mm]{simul_opt_k.pdf}
%	\caption{The expected computing time $\mathbb{E}[T_{\mathrm{comp}}]$ of an MDS code $C_{\mathrm{MDS}}(n,k)$ and two types of group codes $C_\mathrm{G}(\vct{n},\vct{k}^*)$, $C_\mathrm{G}(\vct{n},\vct{k}^{\mathrm{even}})$. Parameters are set to $(\vct{n}, \vct{\mu})=\left([\frac{3}{4}n,\frac{1}{4}n],[1,2]\right)$ and $k=100$.}
%	\label{Fig:simul_optimal_k*}
%\end{figure}

\section{Computing Time Analysis for General $L$}\label{Section:generalL}
\subsection{Computing Time for an Arbitrary Task Allocation $\bm{k}$}
This section provides the expected computing time of an $(\bm{n},\bm{k})-$group code for an arbitrary number of groups, i.e. $L\geq2$. The following lemma provides a numerical way to obtain $\TG$ as $n$ grows to infinity. 

\begin{lemma}\label{Lemma:LEMMELL} 
	Consider an $(\bm{n},\bm{\mu})-$group system with $L$ groups. Then, the expected computing time of an $(\bm{n},\bm{k})-$group code satisfies the following:
	\begin{align}
	\lim_{n\to\infty} \mathbb{E}[\TG]&=\lim_{n\to\infty} \mathbb{E}[\max_{i\in[L]} T^{(i)}_{k_i:n_i}] \nonumber \\
	=\max_{i\in[L]}& (\lim_{n\to\infty}\mathbb{E}[T_{k_i:n_i}^{(i)}]) \nonumber\\
	= \max_{i\in[L]}&\left(-\dfrac{1}{k\mu_i}\log(1-\dfrac{k_i}{n_i})\right). \label{Eqn:LEMMELL}
	\end{align} 
\end{lemma}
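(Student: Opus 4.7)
The plan is to follow the same structure as the proof of Lemma \ref{Lemma:OrdBehavior} (the $L=2$ case) but carry it out for arbitrary $L$. The identity \eqref{Eqn:LEMMELL} consists of three equalities: the first is just the definition of $\TG$, the third is a closed-form evaluation of $\lim_{n\to\infty}\mathbb{E}[T^{(i)}_{k_i:n_i}]$ for each group $i$, and the middle one, which pushes the $\max$ outside $\lim\mathbb{E}[\cdot]$, is the key step.

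First I would handle the third equality. Each $T^{(i)}_{k_i:n_i}$ is the $k_i$-th order statistic of $n_i$ i.i.d.\ $\mathrm{Exp}(k\mu_i)$ variables, so the classical formulas give
\begin{equation*}
\mathbb{E}[T^{(i)}_{k_i:n_i}]=\frac{1}{k\mu_i}\sum_{j=n_i-k_i+1}^{n_i}\frac{1}{j},\quad \mathrm{Var}(T^{(i)}_{k_i:n_i})=\frac{1}{(k\mu_i)^2}\sum_{j=n_i-k_i+1}^{n_i}\frac{1}{j^2}.
\end{equation*}
Under the standing scaling $k,n_i,k_i=\Theta(n)$, the harmonic-sum asymptotic $\sum_{j=n_i-k_i+1}^{n_i}1/j=\log\frac{n_i}{n_i-k_i}+o(1)$ produces $\mathbb{E}[T^{(i)}_{k_i:n_i}]\to\tau_i:=-\frac{1}{k\mu_i}\log(1-k_i/n_i)$, while the variance decays at rate $\Theta(1/n^3)$. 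In particular each $T^{(i)}_{k_i:n_i}$ converges to the deterministic quantity $\tau_i$ in $L^2$.

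The main obstacle is the middle equality. Using the elementary Lipschitz bound $|\max_i T^{(i)}_{k_i:n_i}-\max_i\tau_i|\le\max_i|T^{(i)}_{k_i:n_i}-\tau_i|$ together with $\mathbb{E}\bigl[\max_i(T^{(i)}_{k_i:n_i}-\tau_i)^2\bigr]\le\sum_i\mathrm{Var}(T^{(i)}_{k_i:n_i})\to 0$, I obtain $L^2$, and hence $L^1$, convergence of $\max_i T^{(i)}_{k_i:n_i}$ to $\max_i\tau_i$. This yields $\lim_{n\to\infty}\mathbb{E}[\max_i T^{(i)}_{k_i:n_i}]=\max_i\tau_i=\max_i\lim_{n\to\infty}\mathbb{E}[T^{(i)}_{k_i:n_i}]$, closing the argument. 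An alternative route is induction on $L$ using Lemma \ref{Lemma:OrdBehavior} as the base case, applied recursively to $\max(\max_{i<L}T^{(i)}_{k_i:n_i},\,T^{(L)}_{k_L:n_L})$; either way the technical heart is concentration of the per-group order statistics, so I would prefer the direct concentration argument above.
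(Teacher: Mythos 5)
Your proof is correct, but it takes a genuinely different route from the paper's. The paper proves this lemma by induction on $L$: the base case $L=2$ is Lemma \ref{Lemma:OrdBehavior}, itself established via the representation $T_{k:n}=\xi-(\widetilde{F}_n(\xi)-k/n)/f(\xi)+R_n$ and the asymptotic normality of central order statistics, which is used to show that the indicator $\mathbbm{1}_{T^{(\gamma)}_{k_\gamma:n_\gamma}>T^{(\delta)}_{k_\delta:n_\delta}}$ converges to $\mathbbm{1}_{\xi^{(\gamma)}>\xi^{(\delta)}}$; the inductive step then replaces $\max_{i\in[L']}T^{(i)}_{k_i:n_i}$ by the single order statistic with the largest limiting mean and reduces to the two-group case. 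You instead exploit the exponential model directly through the R\'enyi representation, obtaining exact expressions for the mean and variance of each $T^{(i)}_{k_i:n_i}$, and then deduce $L^2$ (hence $L^1$) convergence of $\max_i T^{(i)}_{k_i:n_i}$ to the deterministic constant $\max_i\tau_i$ via the Lipschitz property of the max. This buys two things: the argument is more elementary and self-contained (no appeal to asymptotic-normality results for order statistics), and it rigorously justifies the interchange of limit and expectation, which in the paper's Appendix A is asserted from non-negativity alone --- a justification that by itself only yields a Fatou-type inequality. What you give up is generality: the exact moment formulas are special to the exponential distribution, whereas the paper's route would in principle extend to other completion-time laws. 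One small slip: your displayed bound should read $\mathbb{E}[\max_i(T^{(i)}_{k_i:n_i}-\tau_i)^2]\le\sum_i\mathbb{E}[(T^{(i)}_{k_i:n_i}-\tau_i)^2]$, and each summand is the variance plus the squared bias $(\mathbb{E}[T^{(i)}_{k_i:n_i}]-\tau_i)^2$; since both terms vanish under the stated scaling $k,n_i,k_i=\Theta(n)$, the conclusion is unaffected.
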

\begin{proof}
	The proof is located at Appendix \ref{proof:Lemma:LEMMEL}.
\end{proof}

This lemma signals that the expected computing time of an $(\bm{n},\bm{k})-$group code can be easily obtained when $\vct{n},\vct{k}$ and $\vct{\mu}$ are given. 

\subsection{Optimizing Task Allocation}
In this subsection, we present the optimal task allocation rule $\vct{k}^*$ for given parameters $\vct{n}, \bm{\mu}$ and $k$. Before optimizing the task allocation vector $\vct{k}$, we provide a relationship between order statistics $T_{k:n}$ and $\{T^{(i)}_{k_i:n_i}\}_{i=1}^L$. 
%Now, we suggest bounds on the $k^{th}$ order statistic $T_{k:n}$ in terms of $L$ independent $k_i^{th}$ order statistics to compare $T_{\mathrm{comp\_MDS}}$ and $T_{\mathrm{comp\_G}}$ in the following proposition. 
\begin{lemma}\label{Lemma:Bounds_L}
	Under the scenario of computing $k$ tasks on an $(\bm{n},\bm{\mu})-$group system with $L$ groups, consider applying an $(\bm{n},\bm{k})-$group code where $\bm{n}=[n_1,n_2,\dots,n_L]$ and $\bm{k}=[k_1,k_2,\dots,k_L]$. Given an arbitrary realization of completion time $\{T_j^{(i)}\}_{i \in [L], j \in [n_i]}$ of workers, let $T_{k:n}$ be the $k^{th}$ smallest value among $\{T_j^{(i)}\}_{i\in[L],j\in[n_i]}$. Meanwhile, $T_{k_i:n_i}^{(i)}$ denotes the $k_i^{th}$ smallest value among $\{T_j^{(i)}\}_{j=1}^{n_i}$. Then, we have
	\begin{equation*}
	\min_{i\in[L]} T^{(i)}_{k_i:n_i} \leq T_{k:n} \leq \max_{i\in[L]} T^{(i)}_{k_i:n_i}.
	\end{equation*}
\end{lemma}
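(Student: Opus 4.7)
The plan is to mimic the $L=2$ argument from Lemma~\ref{proposition:Bounds_L2}, but replace the binary partitioning with a pigeonhole argument across all $L$ groups. First I would define, for each $i\in[L]$, the subset $\mathbb{U}_i=\{T_j^{(i)}:T_j^{(i)}\le T_{k:n},\ j\in[n_i]\}$ of group-$i$ completion times that are at most the global $k$-th order statistic, and set $k_i':=\abs{\mathbb{U}_i}$. Because $T_{k:n}$ is by definition the $k$-th smallest value across all groups, the disjoint union $\bigcup_i \mathbb{U}_i$ has exactly $k$ elements, giving the crucial accounting identity $\sum_{i=1}^L k_i' = k = \sum_{i=1}^L k_i$. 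Moreover, $\mathbb{U}_i$ consists of the $k_i'$ smallest values within group $i$, so
\begin{equation*}
T_{k_i':n_i}^{(i)} \le T_{k:n} < T_{k_i'+1:n_i}^{(i)},
\end{equation*}
with the convention $T_{n_i+1:n_i}^{(i)} = +\infty$ handling the edge case $k_i'=n_i$.

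Next I would convert this accounting identity into a pigeonhole dichotomy. For the upper bound $T_{k:n}\le \max_i T_{k_i:n_i}^{(i)}$: if some coordinate satisfies $k_i' < k_i$, then $k_i'+1\le k_i$ yields $T_{k:n} < T_{k_i'+1:n_i}^{(i)} \le T_{k_i:n_i}^{(i)} \le \max_j T_{k_j:n_j}^{(j)}$; otherwise $k_i'\ge k_i$ for every $i$, which together with $\sum k_i' = \sum k_i$ forces $k_i' = k_i$ for all $i$, and then $T_{k:n}$, being the largest element of $\bigcup_i \mathbb{U}_i$, is attained as $T_{k_i:n_i}^{(i)}$ for some $i$, again yielding $T_{k:n}\le \max_i T_{k_i:n_i}^{(i)}$. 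For the lower bound, a symmetric split works: if some $k_i' > k_i$, then $T_{k_i:n_i}^{(i)} \le T_{k_i':n_i}^{(i)} \le T_{k:n}$, so $\min_i T_{k_i:n_i}^{(i)}\le T_{k:n}$; and if no such $i$ exists, the sum constraint again forces $k_i'=k_i$ for all $i$, giving $T_{k_i:n_i}^{(i)}\le T_{k:n}$ for every $i$.

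This is essentially bookkeeping rather than a hard analytical step, so I do not expect a serious obstacle; the main thing to get right is the pigeonhole dichotomy driven by $\sum_i k_i' = \sum_i k_i$, which reduces what could look like $2^L$ subcases to a single clean case analysis and avoids having to track each group individually as one would in a naive generalization of the $L=2$ proof. Once those cases are handled, the sandwich $\min_{i\in[L]} T^{(i)}_{k_i:n_i} \le T_{k:n} \le \max_{i\in[L]} T^{(i)}_{k_i:n_i}$ holds for every realization, and specializing to $L=2$ recovers Lemma~\ref{proposition:Bounds_L2} as a sanity check.
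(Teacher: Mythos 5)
Your proof is correct, but it takes a genuinely different route from the paper. The paper proves Lemma~\ref{Lemma:Bounds_L} by induction on $L$: it merges groups $2,\dots,L$ into a single super-group, applies the two-group bound of Lemma~\ref{proposition:Bounds_L2} to the partition $\{1\}$ versus $\{2,\dots,L\}$, and then recursively bounds the order statistic of the merged group, chaining $\min(\cdot,\min(\cdot,\cdot)) \le T_{k:n} \le \max(\cdot,\max(\cdot,\cdot))$. You instead give a direct, single-shot argument for all $L$ groups simultaneously: the counting identity $\sum_i k_i' = k = \sum_i k_i$ plus pigeonhole splits the analysis into ``some $k_i' < k_i$'' (resp.\ ``some $k_i' > k_i$'') versus the degenerate case $k_i'=k_i$ for all $i$, in which $T_{k:n}$ is itself one of the $T^{(i)}_{k_i:n_i}$. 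Your version is more self-contained and makes explicit the accounting that the paper's induction hides inside the intermediate order statistic of the merged groups (which the paper never writes down explicitly, making its induction step somewhat terse); the paper's version is shorter on the page because it reuses the already-proved $L=2$ lemma as a black box. One small point to keep tidy in your write-up: in the all-equal case of the lower bound you need $k_i'\ge 1$ to assert $T^{(i)}_{k_i':n_i}\le T_{k:n}$, which holds because the task allocation entries $k_i$ are positive integers; and distinctness of the realized completion times (almost sure for the continuous model) is what makes $|\bigcup_i \mathbb{U}_i|=k$ exact.
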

% It is natural that $\max(0,k-\sum_{j\neq i} k_j) < k_i < \min(n_i, k)$
\begin{proof}
	The proof can be found at Appendix \ref{proof:prop:boundL}
\end{proof}
Here we recall that $T_{k:n}$ is the computing time of an $(n,k)-$MDS code and the upper bound $\Max{i\in[L]} T^{(i)}_{k_i:n_i}$ is the computing time of an $(\bm{n},\bm{k})-$group code. Now, Theorem \ref{Theorem:ExecTime_L} specifies the optimal task allocation $\bm{k}^*$ defined as \eqref{Eqn:kStar} when there are $L$ groups. Moreover, the computing time of an $(\vct{n}, \vct{k}^*)-$group code and an $(n,k)-$MDS code is compared. 
\begin{theorem} \label{Theorem:ExecTime_L}
	Consider the scenario of computing $k$ tasks on an $(\bm{n},\bm{\mu})-$group system with $L$ groups, where an $(\bm{n},\bm{k})-$group code is applied.
	In the asymptotic regime of large $n$, the optimal task allocation $\bm{k}^*=[k_1^*,k_2^*, \cdots, k_L^*]$ can be obtained\footnote{Here we assume that $k_i^*$ is an integer for $i \in [L]$ since task allocation vector $\vct{k}$ consists of integer values. However, in the case of $k_i^*$ not an integer, we can use the round function to set $\vct{k}^*=[\nint{k_1^*}, \nint{k_2^*}, \cdots, \nint{k_L^*}]$. For reasonably large $n$ and $k$, this rounding function has a negligible impact on the overall performance.}
	by solving the following equations for $i\in[L]$:
	\begin{align}
	k_i^* + \sum_{j\neq i} n_j\left(1-\left(1-\frac{k_i^*}{n_i}\right)^{\frac{\mu_j}{\mu_i}}\right) = k.
	\end{align}
	Moreover, the corresponding expected computing time of an $(\bm{n},\bm{k}^*)-$group code is equal to that of an $(n,k)-$MDS code as $n$ goes to infinity, i.e.
	\begin{equation*}	
	\lim_{n\to\infty} \mathbb{E}[\TGO]=\lim_{n\to\infty} \mathbb{E} [\TM].
	\end{equation*} 
\end{theorem}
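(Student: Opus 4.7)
The plan is to lift the two-group argument of Theorem~\ref{Lemma:ExecTime_L2} to general $L$. First, Lemma~\ref{Lemma:LEMMELL} reduces the stochastic problem to a deterministic minimax: setting $f_i(k_i) \coloneqq -\frac{1}{k\mu_i}\log(1-k_i/n_i)$, we have $\lim_{n\to\infty}\mathbb{E}[\TG] = \max_{i\in[L]} f_i(k_i)$, so finding $\vct{k}^*$ amounts to solving $\min_{\vct{k}} \max_{i\in[L]} f_i(k_i)$ subject to $\sum_{i=1}^{L} k_i = k$ and $0 \le k_i \le n_i$. Each $f_i$ is continuous, strictly increasing, and convex on $[0,n_i)$, so the pointwise maximum is convex and admits a minimizer.

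The next step is to show that the minimizer equalizes all coordinate functions, i.e., $f_1(k_1^*) = \cdots = f_L(k_L^*)$. For $L=2$ this is immediate from a one-dimensional crossing picture, but for $L\ge 3$ I would argue by an exchange: if the active set $I^* \coloneqq \{i : f_i(k_i^*) = \max_j f_j(k_j^*)\}$ is a proper subset of $[L]$, pick any $b\notin I^*$ and simultaneously decrease each $k_j^*$ for $j\in I^*$ by small positive $\epsilon_j$, compensating by adding $\sum_{j\in I^*}\epsilon_j$ to $k_b$. For sufficiently small $\epsilon_j$, every active $f_j$ strictly decreases while $f_b$ stays below the previous maximum, contradicting optimality. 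Given equalization, each pairwise equation $f_i(k_i^*) = f_j(k_j^*)$ rearranges to $k_j^* = n_j\left(1 - (1-k_i^*/n_i)^{\mu_j/\mu_i}\right)$; substituting into $\sum_{j=1}^{L} k_j^* = k$ and isolating the $i$-th term produces the claimed equation for each $i\in[L]$.

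For the second assertion, I take expectations and the $n\to\infty$ limit in Lemma~\ref{Lemma:Bounds_L} at $\vct{k}=\vct{k}^*$, obtaining
\begin{equation*}
\lim_{n\to\infty}\mathbb{E}\Bigl[\min_{i\in[L]} T^{(i)}_{k_i^*:n_i}\Bigr] \le \lim_{n\to\infty}\mathbb{E}[T_{k:n}] \le \lim_{n\to\infty}\mathbb{E}\Bigl[\max_{i\in[L]} T^{(i)}_{k_i^*:n_i}\Bigr].
\end{equation*}
By Lemma~\ref{Lemma:LEMMELL} the upper bound equals $\max_i \lim_n \mathbb{E}[T^{(i)}_{k_i^*:n_i}]$, and the lower bound equals the corresponding minimum by an identical argument (each $T^{(i)}_{k_i:n_i}$ concentrates at a constant in the asymptotic regime, so the limit of the expectation commutes with both $\min$ and $\max$). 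Equalization makes these min and max coincide, so the squeeze theorem yields $\lim_n\mathbb{E}[\TM] = \lim_n\mathbb{E}[\TGO]$, as $\TM = T_{k:n}$ and $\TGO = \max_i T^{(i)}_{k_i^*:n_i}$.

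The step I expect to require the most care is the equalization: the $L=2$ proof rides on a one-variable convex minimax of two monotone-opposite branches whose minimum sits exactly at the crossing, whereas for general $L$ one must work on an $(L-1)$-dimensional simplex and select a multi-directional perturbation that pushes every active coordinate down at once while a single inactive coordinate absorbs the slack. Once equalization is secured, the derivation of the system of equations and the concluding squeeze argument are routine generalizations of the $L=2$ case.
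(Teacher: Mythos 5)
Your proposal is correct and follows the same skeleton as the paper's proof: reduce to the deterministic minimax $\min_{\vct{k}}\max_i f_i(k_i)$ via Lemma~\ref{Lemma:LEMMELL}, show the optimum equalizes all $L$ limiting expectations, rearrange the pairwise equalities and sum them to get the stated system, and finish with the squeeze on Lemma~\ref{Lemma:Bounds_L} (including the observation that the min-version of Lemma~\ref{Lemma:LEMMELL} holds by the same concentration argument, which the paper also invokes). Where you genuinely diverge is the equalization step. The paper fixes one coordinate $j$, writes the objective as $\max\bigl(f_j(k_j),\,\max_{i\neq j} f_i(k_i)\bigr)$, and argues that the second branch is a strictly decreasing convex function of $k_j$ by identifying it with the optimal $(L-1)$-group subproblem on $k-k_j$ tasks; the minimizer then sits at the crossing of the two branches, and iterating over $j$ yields the equalization. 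Your active-set exchange argument reaches the same conclusion without this recursive reduction and without needing convexity of the reduced branch --- only continuity and strict monotonicity of each $f_i$ --- so it is cleaner and slightly more general. (One small point to make explicit: the chosen $b\notin I^*$ must have room to absorb the slack, i.e.\ $k_b^*<n_b$; this holds because $f_b(k_b)\to\infty$ as $k_b\to n_b$, so a saturated coordinate would necessarily be active.)

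The one piece the paper proves that you omit is the existence and uniqueness of the solution to the stated system: the paper's final paragraph shows that $h(k_i^*)=k_i^*+\sum_{j\neq i}n_j\bigl(1-(1-k_i^*/n_i)^{\mu_j/\mu_i}\bigr)$ is strictly increasing and straddles $k$ over the feasible interval $[\max(0,k-n+n_i),\min(n_i,k)]$, which is what justifies the claim that $\vct{k}^*$ ``can be obtained by solving'' the equations. In your framework existence is free (a minimizer exists by compactness and must satisfy the system), and uniqueness is a one-line addendum --- each $f_i$ is a strictly increasing bijection onto its range, so the common value $c$ is pinned down by $\sum_i f_i^{-1}(c)=k$ --- but as written this loose end should be tied off.
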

\begin{proof}
	We prove this theorem at Appendix \ref{proof:theorem:ExecTime_L}.
\end{proof}
% ..\cmt{mention the uniqueness and existence of $k^*$. Is it important information?}
Recall that an $(n,k)-$MDS code is optimal in terms of computing time. The above theorem illustrates that an $(\vct{n},\vct{k})-$group code can asymptotically achieve the optimal computing time when the tasks are optimally allocated, i.e. $\vct{k}=\vct{k}^*$.

\section{Decoding Time Analysis}\label{section:ENCDEC}
\begin{figure}[!t]
	\centering
	\includegraphics[width=60mm]{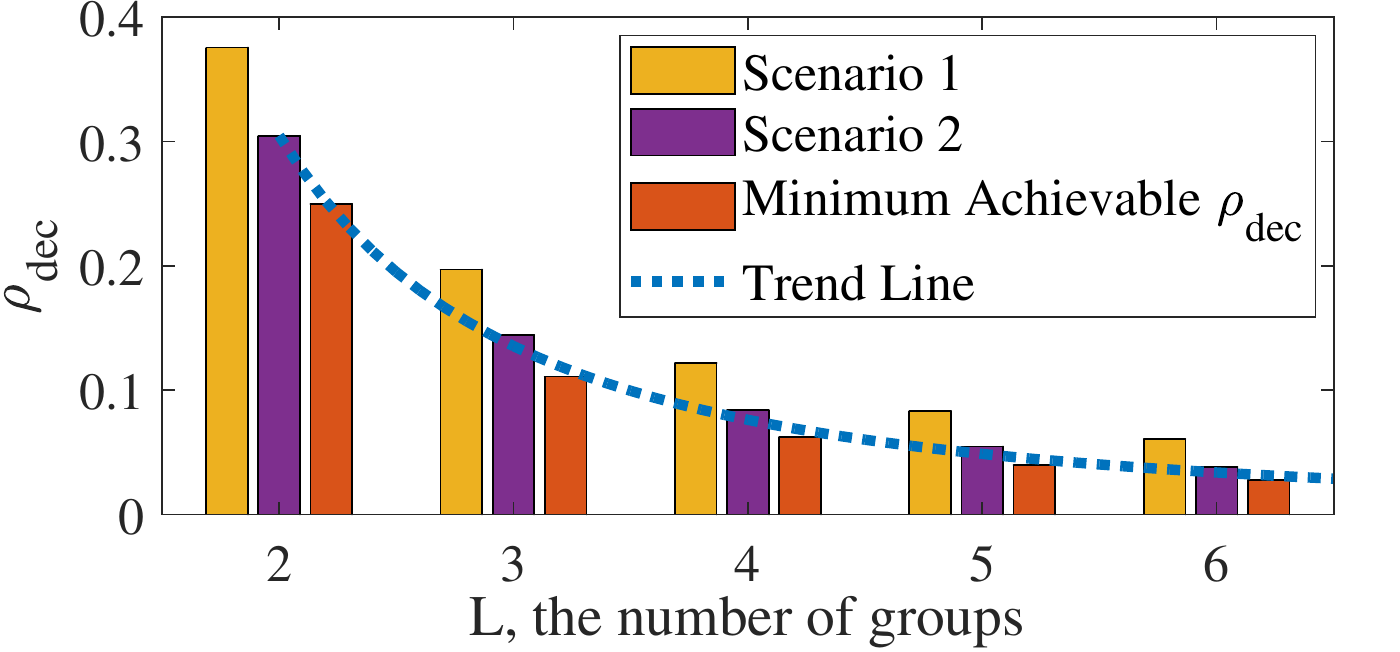}
	\caption{$\rho_\mathrm{dec}$ versus $L$ for three different scenarios; an imbalanced, a balanced and the best ones.}
	\label{Fig:dec_t}
\end{figure}

Now we compare the decoding complexity of the suggested $(\vct{n},\vct{k})-$group code to that of an $(n,k)-$MDS code.
%One can easily notice that it always takes less to decode the suggested $(\vct{n},\vct{k})-$group code compared to an $(n,k)-$MDS code, because $L$ independent $(n_i,k_i)-$MDS codes consisting the group code can be decoded separately and thereby in parallel. 
We assume that the decoding complexity of an $(n,k)-$MDS code is $\mathcal{O}(k^\beta)$ for $\beta> 1$\footnote{According to the recent works \cite{ref:balance1, ref:balance2} on decoding algorithms, practical scenarios satisfy $\beta>1$.}.
%Decoding with $\beta=1$ requires a large field size \cite{ref:polycode}.}. 
Then, the suggested $(\vct{n},\vct{k})-$group code has a decoding complexity of $\mathcal{O}((k_{\mathrm{max}})^\beta)$ by the virtue of parallel decoding, where $k_\mathrm{max}=\Max{i\in[L]} k_i$. Note that decoding complexities of two schemes grow with different orders. For a comparison, we define the ratio of the two orders as
\begin{equation*}
\rho_\mathrm{dec}=\left(\dfrac{k_\mathrm{max}}{k}\right)^{\beta}.
\end{equation*}
Note that the ratio $\rho_\mathrm{dec}$ can be minimized down to $(1/L)^\beta$ when we have $k_\mathrm{max}=k/L$. 

Fig. \ref{Fig:dec_t} illustrates $\rho_\mathrm{dec}$ under two different scenarios for given $n=240$ and $k=120$. In both scenarios, $\vct{n}$ and $\vct{\mu}$ 
are randomly generated. Moreover, the task allocations for both scenarios are selected as the optimal $\vct{k}^*$, depending on the given parameters of $\vct{n}$ and $\vct{\mu}$.
%Both scenarios consider when $\vct{n}$ and $\vct{\mu}$ are randomly generated.
Motivated by the practical setting where the size of each group and the average computing time of each worker are bounded, we set $\vct{n}\sim \mathrm{unif}(0.7\frac{n}{L},1.3\frac{n}{L})$ and $\vct{\mu}\sim \mathrm{unif}(1,2)$ with uniform distributions. 
Scenarios 1 and 2 differ in the rule of ordering the elements of $\vct{n}$ and $\vct{\mu}$, as illustrated below. For scenario 1, we sort the elements of $\vct{n}$ and $\vct{\mu}$ in ascending and descending order, respectively. In other words, $n_i \leq n_j$ and $\mu_i \geq \mu_j$ hold for all $i < j$. This is the scenario when a group with less average response time has less workers.
In the case of scenario 2, both $\vct{n}$ and $\vct{\mu}$ are sorted in ascending order, i.e., $n_i \geq n_j$ and $\mu_i \geq \mu_j$ hold for $i < j$. This is the scenario when a group with less average response time has more workers. 
Under these scenarios, we obtain the average values of $\rho_\mathrm{dec}$ for $10^4$ samples when $\beta=2$. The simulations on two scenarios are compared to the minimum achievable $\rho_\mathrm{dec}=(1/L)^{\beta}$. Moreover, we plotted the trend line, which is set to stretch from the point of Scenario 2 for $L=2$ and grow by a factor of $(1/L)^\beta$. 

Fig. \ref{Fig:dec_t} delineates that $\rho_\mathrm{dec}$ diminishes along with the trend line under any scenarios as $L$ grows. Combining this with the definition of $\rho_\mathrm{dec}$, we can remark that $k_\mathrm{max}$ is inversely proportional to $L$ in practical scenarios. Moreover, the proposed group code provides a significant decoding complexity reduction in both scenarios. For example, when $L=4$, an $(\vct{n},\vct{k})-$group code already achieves roughly 10x reduced decoding complexity compared to an $(n,k)-$MDS code. 

 \begin{table}[t!]
	% increase table row spacing, adjust to taste
	\renewcommand{\arraystretch}{1.3}
	\centering
	% Some packages, such as MDW tools, offer better commands for making tables
	% than the plain LaTeX2e tabular which is used here.
	\footnotesize
	\caption{Code parameters and decoding complexities of various coding schemes used for the simulation.}
	\scalebox{0.9}{\begin{tabular}{|c||c|c|}
			\hline
			\multirow{2}{*}{Code}& Decoding  & Code  \\
			 &Complexity &  Parameters  \\
			\hline
			MDS  & $\mathcal{O}(k^\beta)$ & $(n,k)=(900,400)$  \\
			\hline
			Product  & $\mathcal{O}((\sqrt{k})^{\beta+1})$& $(\sqrt{n},\sqrt{k})^2=(30,20)^2$  \\
			\hline
			\multirow{2}{*}{Group }  &\multirow{2}{*}{$\mathcal{O}(k_\mathrm{max}^\beta)$}& $\vct{n}=[180,170,160,140,130,120]$\\
			& & $\vct{k}=\vct{k}^*=[71,71,70,65,63,60]$\\
			\hline     
		\end{tabular}\label{tab:code_info}
	}
\end{table}

Now we compare the total execution time of the suggested group code to existing schemes by using a simulation. We represent the total execution time as $T_\mathrm{exec}= T_\mathrm{comp}+\alpha T_\mathrm{dec}$, where the coefficient $\alpha\ge 0$ indicates a relative weight of the decoding complexity compared to the computing time. We simulate the computing of $k=400$ tasks on an $(\vct{n},\vct{\mu})-$group system with $\vct{n}=[180,170,160,140,130,120]$ and $\vct{\mu}=[1.25, 1.35,1.45,1.55,1.65,1.75]$, which leads to $n=900$ with $L=6$ groups. For varying $\alpha$, we observe the execution times of the MDS code, the product code, and the suggested group code with parameters listed on Table \ref{tab:code_info}. The decoding complexity of the product code is $\mathcal{O}((\sqrt{k})^{\beta+1})$ because the decoding procedure consists of decoding $2\sqrt{k}$ MDS codes, where the dimension of each MDS code is $\sqrt{k}$. For the group code, we use the optimal task allocation rule $\vct{k}=\vct{k}^*$. For the decoding complexity, we use a parameter of $\beta=2$.

Fig. \ref{Fig:simul_low_alpha} and Fig. \ref{Fig:simul_large_alpha} show the simulated execution times for different regimes of $\alpha$. Fig. \ref{Fig:simul_low_alpha} illustrates the situation where the computing time is dominant, i.e. $\alpha$ is small. When $\alpha$ is the lowest in Fig. \ref{Fig:simul_low_alpha}, the MDS code gives the smallest execution time, followed by the group code and then the product code. This 
coincides with the two mathematical results shown above: the optimality of the MDS code in Lemma \ref{Lemma:OrdBehavior} and the asymptotic optimality of the group code in Theorem \ref{Theorem:ExecTime_L}. Note that the coding scheme that gives the best execution time changes as $\alpha$ varies. Meanwhile, Fig. \ref{Fig:simul_large_alpha} represents the situation where the decoding complexity dominates the execution time. Notice that the execution time of the MDS code becomes inferior to other schemes due to its huge decoding complexity as $\alpha$ grows. On this computing system, the group code gives the best execution time for all regime of $\alpha$. In general, the order of $k_\mathrm{max}$ determines which of the group code or the product code has a better decoding complexity. Recall that the decoding complexity of the group code and the product code are $\mathcal{O}(k_\mathrm{max}^\beta)$ and $\mathcal{O}((\sqrt{k})^{\beta+1})$, respectively. Thus, we can say that the decoding complexity of the group code is better than the product code when 
\begin{equation}
k_\mathrm{max}=\mathcal{O}((\sqrt{k})^{1+\frac{1}{\beta}})\label{eqn:condition}
\end{equation} holds.
Remind that $k_\mathrm{max}$ is inversely proportional to $L$ under practical scenarios as shown in Fig. \ref{Fig:dec_t}. Thus, the condition in \eqref{eqn:condition} reduces to  $L=\Omega\left(\dfrac{1}{(\sqrt{k})^{1+\frac{1}{\beta}}}\right)$.
This implies that when a system has sufficiently large number of groups, the group code outperforms the product code in terms of the decoding complexity. 

\begin{figure}[!t]
	\centerline{\subfigure[Low-$\alpha$ regime]{\includegraphics[height=0.41\columnwidth]{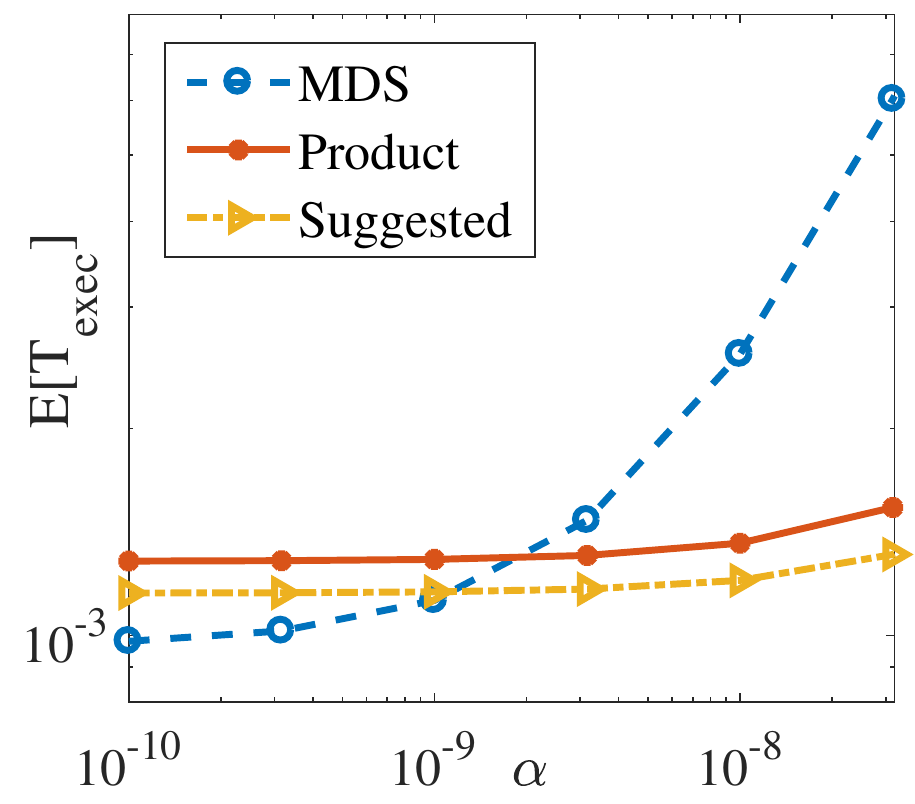}
			% where an .eps filename suffix will be assumed under latex,
			% and a .pdf suffix will be assumed for pdflatex
			\label{Fig:simul_low_alpha}
		}
		\hfil
		\subfigure[Large-$\alpha$ regime]{\includegraphics[height=0.41\columnwidth]{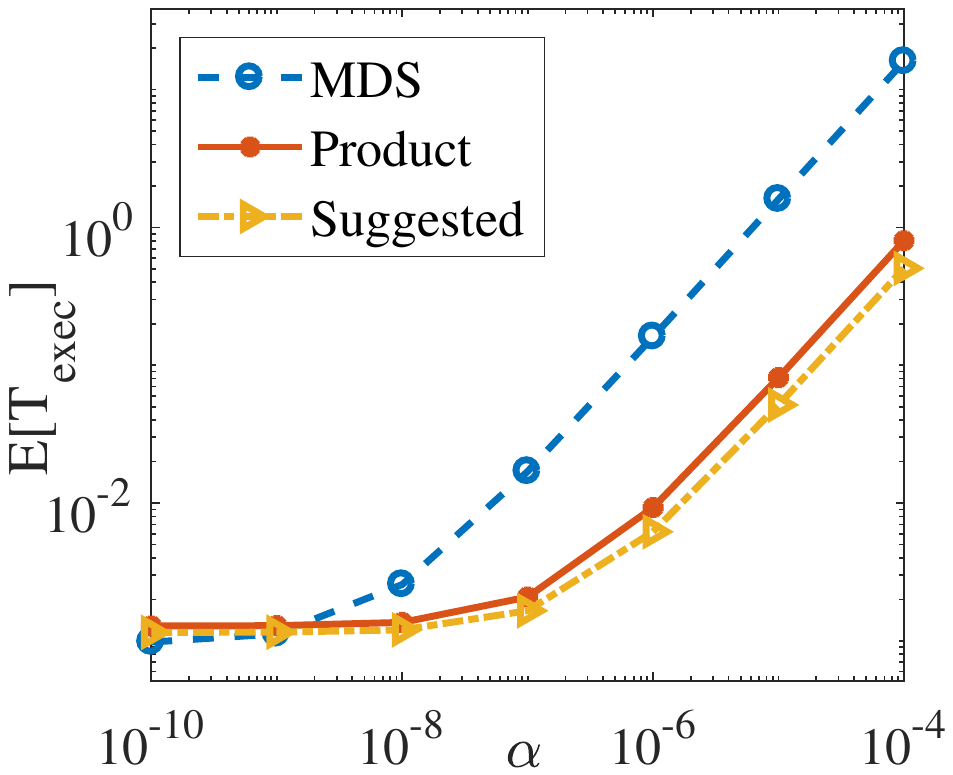}
			% where an .eps filename suffix will be assumed under latex,
			% and a .pdf suffix will be assumed for pdflatex
			\label{Fig:simul_large_alpha}
	}}
	\caption{Simulated results of $\mathbb{E}[T_\text{exec}]$ under various coding schemes.}
	\label{Fig:simul_compare}
\end{figure}

\section{Conclusion}\label{section:conclusion}
In this paper, we propose a coded computation scheme appropriate for a practical model, which reflects the tree-shaped structure and the heterogeneity of groups. Precisely, we consider systems with $L$ heterogeneous groups that have distinct computing time statistics and a different number of workers. We prove that the suggested group-coded scheme can asymptotically achieve the optimal computing time as $n$ grows to infinity. In the regime of finite $n$, numerical results show that the suggested scheme also provides a near-optimal computing time. 
Moreover, the suggested scheme can reduce the decoding complexity down to a factor of $(\frac{1}{L})^\beta$, where $\beta>1$, compared to the existing MDS coded scheme. Finally, the total execution time$-$the sum of the computing time and the decoding time$-$of the suggested scheme is numerically shown to outperform other existing state-of-the-art coding schemes.

\appendices
\section{Proof of Lemma \ref{Lemma:OrdBehavior}}\label{proof:lemmel2}
We first show that $\max(T_{k_1:n_1}^{(1)}, T_{k_2:n_2}^{(2)})$ is determined as one of $T_{k_1:n_1}^{(1)}$ and $T_{k_2:n_2}^{(2)}$ for sufficiently large $n$, and thereby the expected value of $\max(T_{k_1:n_1}^{(1)}, T_{k_2:n_2}^{(2)})$ is determined as the maximum among the expected values of $T_{k_1:n_1}^{(1)}$ and $T_{k_2:n_2}^{(2)}$.

First, consider the $k^{th}$ order statistic of i.i.d. $n$ random variables $T_{k:n}$, whose probability distribution function (PDF) and cumulative distribution function (CDF) are denoted by $f(\cdot)$ and $F(\cdot)$. We represent an empirical CDF obtained with $n$ samples as $\tilde{F}_n(\cdot)$. According to \cite{ref:aymptotic}, $T_{k:n}$ can be represented as 
\begin{equation}\label{eqn:CvgcOrdStat}
T_{k:n} = \xi - \frac{\widetilde{F}_n(\xi)-k/n}{f(\xi)} +R_n,
\end{equation}
where $\xi=F^{-1}(k/n)$ and the third term $R_n$ satisfies $n^{1/2}R_n \xrightarrow{p} 0.$ %Here, the asymptotic distribution of $n^{\frac{1}{2}}(T_{k:n}-\xi)$ is a normal distribution with zero mean and variance $\dfrac{(k/n)(1-k/n)}{f^2(\xi)}$, i.e., 
In \cite{ref:aymptotic}, it is shown that $n^{1/2}(T_{k:n}-\xi) \xrightarrow{d} X$, where $X \sim N \left(0, \dfrac{(k/n)(1-k/n)}{f^2(\xi)}\right)$. Thus, we have
$ T_{k:n} \xrightarrow{d} N \left(\xi,\dfrac{(k/n)(1-k/n)}{nf^2(\xi)}\right).$ 

Now, we examine the convergence of $T_{k_1:n_1}^{(1)} - T_{k_2:n_2}^{(2)}$ by using \eqref{eqn:CvgcOrdStat}.
% We define $r_1, r_2 \in (0,1)$ as constants such that $k_1=r_1n_1$ and $k_2=r_2n_2$. 
Let $f_{(i)}$ and $F_{(i)}$ be the PDF and CDF of an exponential random variable with rate $k\mu_i$, and define $\xi^{(i)}$ as $\xi^{(i)} = F^{-1}_{(i)}(k_i/n_i)$ for $i=1,2$, i.e.
\begin{align}
\xi^{(i)} = F^{-1}_{(i)}(k_i/n_i)=-\frac{1}{k\mu_i}\log\left(1-\frac{k_i}{n_i}\right) \label{Eqn:exp_mean}
\end{align}
Then, we can think of the asymptotic distribution of the $T_{k_1:n_1}^{(1)} - T_{k_2:n_2}^{(2)}$ as follows:
\begin{align*}
&T_{k_1:n_1}^{(1)} - T_{k_2:n_2}^{(2)} -  (\xi^{(1)} - \xi^{(2)})\\
= & \hspace{0.1cm} n^{-\frac{1}{2}} n^{\frac{1}{2}} [(T_{k_1:n_1}^{(1)} - \xi^{(1)}) - (T_{k_2:n_2}^{(2)} - \xi^{(2)})]
\xrightarrow{d}  \hspace{0.1cm} n^{-\frac{1}{2}} Z_V,
\end{align*}
where $Z_V \sim N(0,V)$ for $V = \dfrac{\frac{k_1}{n_1}(1-\frac{k_1}{n_1})}{f_{(1)}^2(\xi^{(1)})}+\dfrac{\frac{k_2}{n_2}(1-\frac{k_2}{n_2})}{f^2_{(2)}(\xi^{(2)})}$. 
%Then, the term, $n^{-\frac{1}{2}}Z_V$, follows $N(0,V/n)$ since it is a constant multiple of a normal random variable. 
By the definition of convergence in distribution,
% the CDF of $T_{k_1:n_1}^{(1)} - T_{k_2:n_2}^{(2)} - (\xi^{(1)} - \xi^{(2)})$ converges into that of $n^{-\frac{1}{2}}Z_V$, i.e., 
for any $\epsilon>0$, we have
\begin{equation*}
\lim_{n\to\infty} \Pr\{T_{k_1:n_1}^{(1)} - T_{k_2:n_2}^{(2)} - (\xi^{(1)} - \xi^{(2)})\leq \epsilon\}\nonumber
= \lim_{n\to\infty} \Phi\left(\epsilon\sqrt{\frac{n}{V}}\right).
\end{equation*}
Then, the convergence of  $T_{k_1:n_1}^{(1)} - T_{k_2:n_2}^{(2)}$ into $\xi^{(1)}-\xi^{(2)}$ can be derived as follows.
\begin{align}
&\lim_{n\to\infty} \Pr \left(\abs{T_{k_1:n_1}^{(1)} - T_{k_2:n_2}^{(2)} - (\xi^{(1)} - \xi^{(2)})} \geq \epsilon \right)\nonumber\\
&=\lim_{n\to\infty} 2\left(1-\Phi\left(\epsilon\sqrt{\frac{n}{V}}\right)\right)=\hspace{0.1cm} 0. \label{Eqn:convinprob}
\end{align}
%&\leq \lim_{n\to\infty} \frac{\sqrt{2V}}{\epsilon\sqrt{n\pi}}\cdot \exp^{-n\epsilon^2/2V} 
This means $T_{k_1:n_1}^{(1)} - T_{k_2:n_2}^{(2)}$ converges in probability towards the constant $\xi^{{(1)}} - \xi^{(2)}$ as $n\to\infty$, i.e.
\begin{equation*}
T_{k_1:n_1}^{(1)} - T_{k_2:n_2}^{(2)} \xrightarrow{p}  \xi^{{(1)}} - \xi^{(2)}.
\end{equation*}
It illustrates that for sufficiently large $n$, the order of two independent order statistics is maintained corresponding to their mean values due to the convergence. 
%This holds for any value of $\xi^{(1)}-\xi^{(2)}$. When $\xi^{(1)}-\xi^{(2)}>0$, the following equation holds by using \eqref{Eqn:convinprob} with $\epsilon = \xi^{(1)}-\xi^{(2)}$.
%\begin{align*}
%&\lim_{n\to\infty}\Pr\left\{T_{k_1:n_1}^{(1)}-T_{k_2:n_2}^{(2)} < 0\right\}\\
%= &\lim_{n\to\infty}\Pr\left\{T_{k_1:n_1}^{(1)}-T_{k_2:n_2}^{(2)}-(\xi^{(1)}-\xi^{(2)})< -(\xi^{(1)}-\xi^{(2)}) \right\}\\
%\leq& \lim_{n\to\infty} \Pr\left\{|T_{k_1:n_1}^{(1)}-T_{k_2:n_2}^{(2)}-(\xi^{(1)}-\xi^{(2)})| \geq \xi^{(1)}-\xi^{(2)} \right\}
%\\=& \ \ 0 
%\end{align*} 
%In a similar procedure, $\lim_{n\to\infty}\Pr\left\{T_{k_1:n_1}^{(1)}-T_{k_2:n_2}^{(2)} > 0\right\}=0$ holds  when  $\xi^{(1)}-\xi^{(2)} < 0$ by inserting $\epsilon=\xi^{(2)}-\xi^{(1)}$ into \eqref{Eqn:convinprob}. Finally, when $\xi^{(1)}-\xi^{(2)} = 0$, the equation \eqref{Eqn:convinprob} becomes
%\begin{align*}
%&\lim_{n\to\infty}  \Pr \left\{|T_{k_1:n_1}^{(1)}-T_{k_2:n_2}^{(2)}|\geq \epsilon \right\} \\
%= &\lim_{n\to\infty} \Pr \left\{T_{k_1:n_1}^{(1)} \neq T_{k_2:n_2}^{(2)} \right\} =\hspace{0.2cm}0.
%\end{align*}
%This means the probability that $T_{k_1:n_1}^{(1)}$ and $T_{k_2:n_2}^{(2)}$ have different values goes to zero as $n\to\infty$. 
Consequently,
% For any value of  $\xi^{(1)}-\xi^{(2)}$, 
the sign of $T_{k_1:n_1}^{(1)}-T_{k_2:n_2}^{(2)}$ loses randomness and is determined in asymptotic regime of large $n$. Therefore, we can claim that
\begin{equation}\label{Eqn:orderofT}
\lim_{n\to\infty}\mathbbm{1}_{T_{k_1:n_1}^{(1)}>T_{k_2:n_2}^{(2)}} = \mathbbm{1}_{\xi^{(1)}>\xi^{(2)}}.
\end{equation}
This equation indicates that in asymptotic regime of large $n$, the random variable $\mathbbm{1}_{T_{k_1:n_1}^{(1)}>T_{k_2:n_2}^{(2)}}$, which has cumbersome distribution, can be substituted with $\mathbbm{1}_{\xi^{(1)}>\xi^{(2)}}$, which is a binary number that can be easily calculated.
%This equality is a key to obtain the asymptotic expectation the computing time of the group code.

Now we prove the statement of Lemma \ref{Lemma:OrdBehavior} by using \eqref{Eqn:orderofT} as follows.
\begin{align}	
&\lim_{n\to\infty}\mathbb{E}\left[\max\left(T_{k_1:n_1}^{(1)},T_{k_2:n_2}^{(2)}\right)\right] \nonumber\\
\overset{\mathrm{(a)}}{=}& \hspace{0.05cm}\mathbb{E}\left[\lim_{n\to\infty}\max\left(T_{k_1:n_1}^{(1)},T_{k_2:n_2}^{(2)}\right)\right] \nonumber\\
=&\hspace{0.1cm}\mathbb{E} \bigg[ \lim_{n\to\infty}{T_{k_1:n_1}^{(1)}}\cdot\lim_{n\to\infty}{\mathbbm{1}_{T_{k_1:n_1}^{(1)} \geq T_{k_2:n_2}^{(2)}}} \nonumber\\
& \hspace{2cm} +\lim_{n\to\infty}{T_{k_2:n_2}^{(2)}}\cdot\lim_{n\to\infty} \mathbbm{1}_{T_{k_1:n_1}^{(1)} < T_{k_2:n_2}^{(2)}} \bigg] \nonumber\\
\overset{\mathrm{(b)}}{=}& \mathbb{E}\left[\lim_{n\to\infty}{T_{k_1:n_1}^{(1)}}\cdot \mathbbm{1}_{\xi^{(1)} \geq \xi^{(2)}} 
+\lim_{n\to\infty}{T_{k_2:n_2}^{(2)}}\cdot{\mathbbm{1}_{\xi^{(1)} < \xi^{(2)}}}\right]\nonumber\\
=& \mathbb{E}\left[\lim_{n\to\infty}{T_{k_1:n_1}^{(1)}}\right]\cdot \mathbbm{1}_{\xi^{(1)} \geq \xi^{(2)}}
+\mathbb{E}\left[\lim_{n\to\infty}{T_{k_2:n_2}^{(2)}}\right]\cdot{\mathbbm{1}_{\xi^{(1)} < \xi^{(2)}}}\nonumber\\
=& \mathbb{E}\left[\lim_{n\to\infty} T_{k_1:n_1}^{(1)}\right]\cdot \mathbbm{1}_{\mathbb{E}\left[\Lim{n\to\infty}{T_{k_1:n_1}^{(1)}}\right] \geq \mathbb{E}\left[\Lim{n\to\infty}{T_{k_2:n_2}^{(2)}}\right]}\nonumber\nonumber\\
&\hspace{0.5mm}+\mathbb{E}\left[\lim_{n\to\infty} T_{k_2:n_2}^{(2)}\right] \cdot \mathbbm{1}_{ \mathbb{E}\left[\Lim{n\to\infty}{T_{k_1:n_1}^{(1)}}\right] < \mathbb{E}\left[\Lim{n\to\infty}{T_{k_2:n_2}^{(2)}}\right]}\nonumber\\
=& \max\left(\mathbb{E}\left[\lim_{n\to\infty}T_{k_1:n_1}^{(1)}\right],\mathbb{E}\left[\lim_{n\to\infty}T_{k_2:n_2}^{(2)}\right]\right)\nonumber\\
=& \max\left(\lim_{n\to\infty}\mathbb{E}\left[T_{k_1:n_1}^{(1)}\right],\lim_{n\to\infty}\mathbb{E}\left[T_{k_2:n_2}^{(2)}\right]\right)\nonumber \\
=&\max\left(-\dfrac{1}{k\mu_1}\log(1-\dfrac{k_1}{n_1}),-\dfrac{1}{k\mu_2}\log(1-\dfrac{k_2}{n_2})\right)\nonumber
\end{align}
Equality $(a)$ holds since limit and expectation can be interchanged when the random variable is non-negative, which is satisfied because $\max(T_{k_1:n_1}^{(1)},T_{k_2:n_2}^{(2)})\geq0$. Equality $(b)$ holds by \eqref{Eqn:orderofT}. Note that this proof can be directly applied to the min function of two independent order statistics instead of max function.

\section{Proof of Lemma \ref{Lemma:LEMMELL}}\label{proof:Lemma:LEMMEL}
We prove the statement by using the mathematical induction. For the base step, we already prove the statement for $L=2$ in Lemma \ref{Lemma:OrdBehavior}. Now, we show if the statement is true for an arbitrary $L>2$, then the statement still holds for $L+1$. Before moving onto the proof, we provide the convergence of max function, which is necessary for the proof. Recall that equation \eqref{Eqn:orderofT} shows the order of two independent order statistics is determined by their expectation values for sufficiently large $n$. Thus, we can claim for arbitrary $\gamma,\delta \in [L]$, the following statement is true.
\begin{equation*}
\lim_{n\to\infty}\mathbbm{1}_{T_{k_{\gamma}:n_{\gamma}}^{(\gamma)}>T_{k_{\delta}:n_{\delta}}^{(\delta)}} = \mathbbm{1}_{\xi^{(\gamma)}>\xi^{(\delta)}}
\end{equation*}
This leads to 
\begin{equation}
\lim_{n\to\infty} \max_{i\in[L]} T^{(i)}_{k_i:n_i} = T^{(i_{\mathrm{max}})}_{k_{i_{\mathrm{max}}}:n_{i_{\mathrm{max}}}},\label{eqn:MaxofL}
\end{equation}
where $i_{\mathrm{max}}=\underset{i\in[L]}{\arg\max}\hspace{1mm}\xi^{(i)}.$ In other words, the maximum of $L$ independent order statistics is determined as the one that has the largest expectation value for sufficiently large $n$.

We here move on to the inductive step, assuming the statement holds for $L=L'$ as
\begin{equation}\label{eqn:LEMtoMEL2}
\lim_{n\to\infty} \mathbb{E}[\max_{i\in[L']} T^{(i)}_{k_i:n_i}] =\max_{i\in[L']} (\mathbb{E}[\lim_{n\to\infty} T_{k_i:n_i}^{(i)}]).
\end{equation}
Now, we examine the statement holds for $L'+1$ as well:
\begin{align*}
&\lim_{n\to\infty} \mathbb{E}[\max_{i\in[L'+1]} T^{(i)}_{k_i:n_i}]\\
&=\lim_{n\to\infty} \mathbb{E}[\max(  \max_{i\in[L']} T^{(i)}_{k_i:n_i}, T^{(L'+1)}_{k_{L'+1}:n_{L'+1}})]\\
&=\mathbb{E}[\lim_{n\to\infty} \max( \max_{i\in[L']} T^{(i)}_{k_i:n_i}, T^{(L'+1)}_{k_{L'+1}:n_{L'+1}})]\\
&\overset{\mathrm{(c)}}{=}\mathbb{E}[\lim_{n\to\infty} \max(T^{(i_{\mathrm{max}})}_{k_{i_{\mathrm{max}}}:n_{i_{\mathrm{max}}}}, T^{(L'+1)}_{k_{L'+1}:n_{L'+1}})]\\
&\overset{\mathrm{(d)}}{=}\max( \mathbb{E}[\lim_{n\to\infty} T^{(i_{\mathrm{max}})}_{k_{i_{\mathrm{max}}}:n_{i_{\mathrm{max}}}}], \mathbb{E}[\lim_{n\to\infty} T^{(L'+1)}_{k_{L'+1}:n_{L'+1}}])\\
&\overset{\mathrm{(e)}}{=}\max( \mathbb{E}[\lim_{n\to\infty} \max_{i\in[L']} T^{(i)}_{k_i:n_i}], \mathbb{E}[\lim_{n\to\infty} T^{(L'+1)}_{k_{L'+1}:n_{L'+1}}])\\
&=\max( \lim_{n\to\infty} \mathbb{E}[ \max_{i\in[L']} T^{(i)}_{k_i:n_i}], \mathbb{E}[\lim_{n\to\infty} T^{(L'+1)}_{k_{L'+1}:n_{L'+1}}])\\
&=\max( \max_{i\in[L']} (\mathbb{E}[\lim_{n\to\infty} T_{k_i:n_i}^{(i)}]), \mathbb{E}[\lim_{n\to\infty} T^{(L'+1)}_{k_{L'+1}:n_{L'+1}}])\\
&= \max_{i\in[L'+1]} (\mathbb{E}[\lim_{n\to\infty} T_{k_i:n_i}^{(i)}])\\
&= \max_{i\in[L'+1]} (\lim_{n\to\infty}\mathbb{E}[ T_{k_i:n_i}^{(i)}]).
\end{align*}
Equality $(\mathrm{c})$ holds since $ \underset{i\in[L']}{\max}\hspace{1mm}T^{(i)}_{k_i:n_i}$ becomes the one whose expectation value is the largest for sufficiently large $n$ as shown in \eqref{eqn:MaxofL}. We can lead to equality $(\mathrm{d})$ by Lemma \ref{Lemma:OrdBehavior} since it is equivalent to the case when $L=2$. Equality $(\mathrm{e})$ holds by the assumption \eqref{eqn:LEMtoMEL2}. Thus, we have 
\begin{equation*}
\lim_{n\to\infty} \mathbb{E}[\max_{i\in[L'+1]} T^{(i)}_{k_i:n_i}] =\max_{i\in[L'+1]} (\lim_{n\to\infty}\mathbb{E}[ T_{k_i:n_i}^{(i)}]),
\end{equation*}
which completes the whole proof of this lemma. Similarly, we can show 
\begin{equation*}
\lim_{n\to\infty} \mathbb{E}[\min_{i\in[L']} T^{(i)}_{k_i:n_i}] =\min_{i\in[L']} (\lim_{n\to\infty} \mathbb{E}[T_{k_i:n_i}^{(i)}]).	\qedhere
\end{equation*}

\section{Proof of Lemma \ref{Lemma:Bounds_L}}\label{proof:prop:boundL}
Imagine there are three groups. Then, for arbitrary realization of $\{T_j^{(i)}\}_{i \in [3], j \in [n_i]}$, the following inequalities hold by Lemma \ref{proposition:Bounds_L2}.
\begin{align*}
\min\left(T_{k_1:n_1}^{(1)},\max( T^{(2)}_{k_2:n_2}, T^{(3)}_{k_3:n_3})\right) &\leq T_{k:n} \\
\leq \max \Big(T_{k_1:n_1}^{(1)},&\max( T^{(2)}_{k_2:n_2}, T^{(3)}_{k_3:n_3}) \Big).
\end{align*}
%This is because the time taken for gathering $k-k_1=k_2+k_3$ responses from group $2$ and $3$ is maximum of $T^{(2)}_{k_2:n_2}$ and $ T^{(3)}_{k_3:n_3}$. 
We can change the lower bound by using an apparent inequality $\min( T^{(2)}_{k_2:n_2}, T^{(3)}_{k_3:n_3})\leq \max( T^{(2)}_{k_2:n_2}, T^{(3)}_{k_3:n_3})$ to have
\begin{align*}
\min\left(T_{k_1:n_1}^{(1)},\min( T^{(2)}_{k_2:n_2}, T^{(3)}_{k_3:n_3})\right) &\leq T_{k:n} \\
\leq \max \Big(T_{k_1:n_1}^{(1)},&\max( T^{(2)}_{k_2:n_2}, T^{(3)}_{k_3:n_3}) \Big).
\end{align*} 
Thus, we have
\begin{align*}
\min(T_{k_1:n_1}^{(1)}, T^{(2)}_{k_2:n_2}, T^{(3)}_{k_3:n_3})&\leq T_{k:n}\\ 
\leq& \max(T_{k_1:n_1}^{(1)}, T^{(2)}_{k_2:n_2}, T^{(3)}_{k_3:n_3}).
\end{align*}
We can also prove the statement for an arbitrary $L\geq2$ by repeating this process. Thus, we have
\begin{equation*}
\min_{i\in[L]} T^{(i)}_{k_i:n_i} \leq T_{k:n} \leq  \max_{i\in[L]} T^{(i)}_{k_i:n_i}.\qedhere
\end{equation*}

\section{Proof of Theorem \ref{Theorem:ExecTime_L}}\label{proof:theorem:ExecTime_L}
We first prove that the best task allocation rule $\vct{k}^*$ satisfies that the following equations:
\begin{align}
\lim_{n\to\infty}\mathbb{E}[ T_{k_i^*:n_i}^{(i)}]=\lim_{n\to\infty}\mathbb{E}[ T_{k_j^*:n_j}^{(j)}] \hspace{0.2cm} \mathrm{for} \hspace{0.2cm} i, j\in[L]. \label{Eqn:same_exp_t}
\end{align}
Then, we show the an $(\vct{n},\vct{k}^*)-$group code achieves the same computing time as an $(n,k)-$MDS code in an asymptotic region of large $n$. Afterwards, we provide the proof of the existence and the uniqueness of $\vct{k}^*$.

First, we rewrite the the statement \eqref{Eqn:LEMMELL} of Lemma \ref{Lemma:LEMMELL} w.r.t. $\Lim{n\to\infty}\mathbb{E}[ T_{k_j:n_j}^{(j)}]$ for $j\in[L]$ as follows:
\begin{align*}
\lim_{n\to\infty}& \mathbb{E}[\TG] =\max_{i\in[L]} (\lim_{n\to\infty}\mathbb{E}[ T_{k_i:n_i}^{(i)}]) \\
&= \max(\lim_{n\to\infty}\mathbb{E}[ T_{k_j:n_j}^{(j)}], \max_{i\neq j} (\lim_{n\to\infty}\mathbb{E}[ T_{k_i:n_i}^{(i)}]))\\
&=\max(-\dfrac{1}{k\mu_j}\log(1-\dfrac{k_j}{n_j}), \max_{i\neq j} (\lim_{n\to\infty}\mathbb{E}[ T_{k_i:n_i}^{(i)}])).
\end{align*}

Note that the first variable of the max function is a strictly increasing convex function with $k_j$, whereas the second variable $\Max{i\neq j} (\Lim{n\to\infty}\mathbb{E}[ T_{k_i:n_i}^{(i)}])$ is a strictly deceasing convex function with $k_j$ because it is equivalent to the time for computing $k-k_j$ tasks by using a group code with $L-1$ groups by \eqref{Eqn:LEMMELL}. Hence, taking max of the two variables results in a convex function that has the minimum value at the intersection of the two variables. Hence, the optimal value of $k_j= k_j^*$ satisfies
\begin{equation*}
\max_{i\neq j} (\lim_{n\to\infty}\mathbb{E}[ T_{k_i:n_i}^{(i)}])=\lim_{n\to\infty}\mathbb{E}[ T_{k_j^*:n_j}^{(j)}].
\end{equation*} 
We may write as below:
\begin{equation*}
\lim_{n\to\infty}\mathbb{E}[ T_{k_i:n_i}^{(i)}] \leq \lim_{n\to\infty}\mathbb{E}[ T_{k_j^*:n_j}^{(j)}].
\end{equation*} 
To satisfy the above inequality for all $i \neq j$ and $j\in[L]$, the optimal tast allocation $\vct{k}^*$ must satisfy the equation \eqref{Eqn:same_exp_t}.
%Note that the right-hand side has the exactly same form as the max function we start, thus equation \eqref{Eqn:same_exp_t} can be obtained by repeating this process for $L-1$ times.

Next, we consider the following bounds, which obtained by taking $\Lim{n\to\infty}\mathbb{E}[\cdot]$ of the bounds suggested in Lemma \ref{Lemma:Bounds_L} and applying Lemma \ref{Lemma:LEMMELL}:
\begin{align*}
\min_{i\in[L]} (\lim_{n\to\infty}\mathbb{E}[ T_{k_i:n_i}^{(i)}]) \leq \lim_{n\to\infty} \mathbb{E} [T_{k:n}] \leq 	\max_{i\in[L]} (\lim_{n\to\infty}\mathbb{E}[ T_{k_i:n_i}^{(i)}]).	
\end{align*}

For $\vct{k}=\vct{k}^*$, the above lower and upper bounds have an equal value by \eqref{Eqn:same_exp_t}. Hence, $\Lim{n\to\infty} \mathbb{E} [T_{k:n}] $ and $\Max{i\in[L]} (\Lim{n\to\infty}\mathbb{E}[ T_{k_i^*:n_i}^{(i)}])$ have the same value, which correspond to $\Lim{n\to\infty}\mathbb{E}[\TM]$ and $\Lim{n\to\infty}\mathbb{E}[\TGO]$ respectively. Thus, we prove
\begin{equation*}
\lim_{n\to\infty} \mathbb{E}[\TGO]=\lim_{n\to\infty} \mathbb{E} [\TM].
\end{equation*}

Lastly, we move on to the proof of the existence and the uniqueness of $\vct{k}^*$. Remark that the interval of $k_i^*$ is confined as $k_i^*\in[\max(0,k-n+n_i),\min(n_i,k)]$ due to the conditions $k_i\leq n_i$ and $k\leq n$. By inserting equation \eqref{Eqn:exp_mean} to \eqref{Eqn:same_exp_t}, the following equation is obtained for $i,j\in[L]$:
\begin{equation*}
k_j^*=n_j\left(1-\left(1-\frac{k_i^*}{n_i}\right)^{\frac{\mu_j}{\mu_i}}\right)
\end{equation*}
Thus, we may write the following equation which consists of a single variable $k_i^*$.
\begin{align*}
k=\sum_{i\in[L]} k_i^* = k_i^*+\sum_{j\neq i} n_j\left(1-\left(1-\frac{k_i^*}{n_i}\right)^{\frac{\mu_j}{\mu_i}}\right).
\end{align*}
For simplicity, we denote the right-hand side by $h(k_i^*)$. Note that $h(k_i^*)$ is a strictly increasing function with $k_i^*$. thus we can complete the proof if we show $h(k_i^*)$ starts from a value lower than $k$ and reaches to another value greater than $k$ in the given interval. Firstly, when the lower bound $\max(0,k-n+n_i)$ is $0$, it is obvious that $h(0)=0$. The other case, when $k-n+n_i>0$, is also easily proved as,
\begin{align*}
&h(k-n+n_i) \\
=& k-n+n_i  + \sum_{j\neq i} n_j\left(1-\left(1-\frac{k-n+n_i}{n_i}\right)^{\frac{\mu_j}{\mu_i}}\right)\\
=&k - \sum_{j\neq i} n_j\left(\frac{n-k}{n_i}\right)^{\frac{\mu_j}{\mu_i}} < k.
\end{align*}
Similarly, when the upper bound $\min(n_i,k)$ is $n_i$, one can easily show that $h(n_i)=n>k$. The other case of $\min(n_i,k)=k$, i.e. $k < n_i$, also satisfies $h(k) > k$ as follows.
\begin{align*}
h(k)& = k  + \sum_{j\neq i} n_j\left(1-\left(1-\frac{k}{n_i}\right)^{\frac{\mu_j}{\mu_i}}\right)\\
&=k+ \sum_{j\neq i} n_j\left(1-\left(\frac{n_i-k}{n_i}\right)^{\frac{\mu_j}{\mu_i}}\right) > k.
\end{align*}

We complete the proof by showing that $h(k_i^*)<k$ for the lower bound $k_i^*=\max(0,k-n+n_i)$ and $h(k_i^*)>k$ for the upper bound $k_i^*=\min(n_i,k)$, which guarantees the existence of the one intersection between a strictly increasing function $h(k_i^*)$ and a constant function $k$ with $k_i^*$.

\ifCLASSOPTIONcaptionsoff
\newpage
\fi

\bibliographystyle{IEEEtran}
\bibliography{IEEEabrv,ISIT19Kim}
\end{document}